\documentclass[10pt]{article}
\usepackage[top=0.9in,bottom=1.1in,left=1.05in,right=1.05in]{geometry}
\usepackage{amsmath,amssymb}
\usepackage{amsthm}
\usepackage{threeparttable}
\usepackage{latexsym}
\usepackage{mathrsfs,dsfont}
\usepackage{cancel}
\usepackage{comment}
\usepackage{float}
\usepackage{graphicx}
\usepackage{epstopdf}
\usepackage{color}
\usepackage{enumerate}
\usepackage{natbib}
\usepackage{mathtools}
\mathtoolsset{showonlyrefs}
\DeclareGraphicsExtensions{.eps}


\numberwithin{equation}{section}
\DeclareMathOperator*{\esssup}{ess\,sup}
\newcommand{\MCG}{\mathcal{G}}
\newcommand{\MCC}{\mathcal{C}}
\newcommand{\MCA}{\mathcal{A}}

\newcommand{\MCF}{\mathcal{F}}
\newcommand{\MCO}{\mathcal{O}}
\newcommand{\MCD}{\mathcal{D}}

\newcommand{\MCK}{\mathcal{K}}
\newcommand{\EE}{\mathbb{E}}
\newcommand{\PP}{\mathbb{P}}
\newcommand{\RR}{\mathbb{R}}
\newcommand{\NN}{\mathbb{N}}

\newcommand{\Ltx}{\mathcal{L}_{t,x}}

\newcommand{\ltwonorm}[1]{\left\lVert#1\right\rVert_2}

\newcommand{\pnorm}[1]{\left\lVert#1\right\rVert_p}
\newcommand{\pqnorm}[1]{\left\lVert#1\right\rVert_{pq}}
\newcommand{\prnorm}[1]{\left\lVert#1\right\rVert_{pr}}

\newcommand{\Vy}{V^{\eps}}

\newcommand{\vz}{v^{(0)}}
\newcommand{\vo}{v^{(1)}}

\newcommand{\pz}{{\pi^{(0)}}}

\newcommand{\Vyl}{V^{\pz,\eps}}

\newcommand{\eps}{\epsilon}

\newcommand{\abs}[1]{\left|#1\right|}
\newcommand{\average}[1]{\left\langle#1\right\rangle}

\newcommand{\mc}[1]{\mathcal{#1}}

\newcommand{\ud}{\,\mathrm{d}}

\newcommand{\Wh}[1]{W^{(H)}_{#1}}

\newcommand{\half}{\frac{1}{2}}
\newcommand{\Yh}[1]{Y^{\eps,H}_{#1}}
\newcommand{\Yht}[1]{\widetilde Y^{\eps,H}_{#1}}
\newcommand{\kereps}{\MCK^\eps}

\newtheorem{theo}{Theorem}[section]
\newtheorem{lem}[theo]{Lemma}

\newtheorem{rem}[theo]{Remark}
\newtheorem{prop}[theo]{Proposition}

\begin{document}

\title{\vspace{-50pt} Portfolio Optimization under Fast Mean-reverting and Rough Fractional Stochastic
	Environment}
\author{Jean-Pierre Fouque\thanks{Department of Statistics \& Applied Probability,
 University of California,
        Santa Barbara, CA 93106-3110, {\em fouque@pstat.ucsb.edu}. Work  supported by NSF grant DMS-1409434 and DMS-1814091.}
        \and Ruimeng Hu\thanks{Department of Statistics, Columbia University, New York, NY 10027-4690, {\em rh2937@columbia.edu}. The work was mainly done when RH was a graduate student at the University of California, Santa Barbara.}
        }
\date{\today}
\maketitle

\begin{abstract}
	
	Fractional stochastic volatility models have been widely used to capture the non-Markovian  structure revealed from financial time series of realized volatility. On the other hand, empirical studies have identified scales in stock price volatility: both fast-time scale on the order of days and slow-scale on the order of months. So, it is natural to study the portfolio optimization problem under the effects of dependence behavior which we will model by fractional Brownian motions with Hurst index $H$, and in the fast or slow regimes characterized by small parameters $\eps$ or $\delta$. For the slowly varying volatility with $H \in (0,1)$, it was shown that the first order correction to the problem value contains two terms of order $\delta^H$, one random component and one deterministic function of state processes, while for the fast varying case with $H > \half$, the same form holds at order $\eps^{1-H}$. This paper is dedicated to the remaining case of a fast-varying  rough environment ($H < \half$) which exhibits  a different behavior.  We show that, in the expansion, only one deterministic term of order $\sqrt{\eps}$ appears in the first order correction. 

\end{abstract}

\textbf{Keywords: } Optimal portfolio, rough stochastic volatility, fractional Ornstein--Uhlenbeck process, martingale distortion, asymptotic optimality.

\section{Introduction}\label{sec_intro}
Portfolio optimization  in  continuous time  was originally studied by \cite{Me:69, Me:71}. In this celebrated work, explicit solutions are provided on how to allocate wealth between risky and risk-less assets, and how to consume wealth so that expected utility is maximized. The model there for the underlying assets is the Black--Scholes--Merton model, and the utility functions are of specific type, for instance, of Constant Relative Risk Aversion (CRRA) type.

In this paper, we study the Merton problem with power utility under a non-Markovian stochastic environment that is fast mean-reverting. Such proposed modeling is supported by recent studies. It is shown that stochastic volatility has rapidly decaying correlations at the origin. A common approach for modeling short-range dependence is by using fractional Brownian motion (fBm) processes with $H < 1/2$, see \cite{roughvol}. On the other hand, empirical study in \cite{FoPaSiSo:03} shows that, in order to fit well the implied volatility, it is appropriate to consider at least short-time scales volatility. To combine these two features, we let the underlying asset follow a Geometric Brownian motion-like model, and we let its return and volatility be driven by a fast-varying factor  $\Yh{t}$ characterized as the solution of
\begin{equation*}
\ud \Yh{t} = -\frac{a}{\eps}\,\Yh{t} \ud t + \frac{1}{\eps^{H}}\ud \Wh{t}.
\end{equation*}
For a full justification/discussion about this model, we refer to \cite{GaSo:17}, where $\Yh{t}$ is firstly introduced. Here $\eps \ll 1$ is a small parameter, which makes $\Yh{t}$ fast-varying and its mean-reversion time scale proportional to $\eps$, and, $\Wh{t}$ denotes a fBm with Hurst index $H \in (0, \half)$. The solution to the above stochastic differential equation (SDE) is actually a fractional Ornstein--Uhlenbeck process. Some properties regarding $\Yh{t}$ are discussed in Section~\ref{sec_fastfOU} as a preparation for the asymptotic derivations in Section~\ref{sec_asymppower}. For further references on fBm and fOU processes, we refer to \cite{MaVa:68, ChKaMa:03,Co:07,BiHuOkZh:08,KaSa:11}. 

\medskip
\noindent{\bf Motivation and related literature.} Main features and reasons to consider such a problem setup are the following.

Firstly, the fOU process $\Yh{t}$ is Gaussian and it admits a convenient moving-average representation with respect to a well-studied kernel (cf. \eqref{eq_Yh}). This maintains  analytic tractability and simplifies the derivation of needed estimates. However, the modeling of $\Yh{t}$ is not limited to fOU processes, and more general kernels can be considered. For example, an alternative model is analyzed in \cite[Appendix B]{GaSo:17} for the linear pricing problem.

Secondly, we choose to focus on power utilities and one factor model, which means the return and volatility of the underlying asset are driven by only one process. The reason is mainly due to tractability.  In this case, a convenient martingale distortion transformation (MDT) is available to express the problem value and the optimal strategy, even under the non-Markovian modeling of $\Yh{t}$. However, we remark that partial results can still be obtained under general utilities where MDT is not available, following a similar argument as in our previous work \cite{FoHu:17, FoHu2:17}. The details of this generalization will not be included here.

Thirdly, the fOU process $\Yh{t}$ is rough ($H < \half$) and fast mean-reverting ($\eps$ small). This is the missing case in our previous work \cite{FoHu:17,FoHu2:17}, where the asset allocation problem is studied under a slowly varying fractional stochastic environment (fSE) with $H \in (0, 1)$, corresponding to $\eps:=1/\delta$ large, and under a fast varying fSE ($\eps$ small) with $H>\half$. Therefore, this paper completes the full picture 
of the analysis of the portfolio optimization problem in single-factored fractional stochastic environments.

Fourthly, although it is natural to consider multiscale factor models for risky assets, with a fast factor and a slow factor as in \cite{FoSiZa:13} in a Markovian framework, the analysis requires more technical details, as the MDT is not available. This will be presented in another paper in preparation (\cite{Hu:XX}). 

\medskip
We now summarize the related existing literature in the following table. 
\begin{table}[H]
	\centering
	\caption{Problems, models and expansion results. The acronyms used are: SV = Stochastic Volatility, SE = Stochastic Environment, AO = Asymptotic Optimality.}\label{table1}
	\begin{threeparttable}
	\begin{tabular}{|l|c|c|c|}\hline
		Paper  & Problem & Model & Form of Solution\tnote{$\ast$} \\ \hline\hline 
		\cite{FoPaSiSo:11} & {\bf Linear} & SV + Multiscale & $P^{(0)} + \sqrt{\eps} P^{(1,0)} + \sqrt{\delta} P^{(0,1)} $  \\ 
		\cite{GaSo:15} & Option &  fSV + Slow ($ H \in (0,1)$) & $P^{(0)} + \phi^\delta + \delta^H P^{(1)}$\tnote{}\\ 
		\cite{GaSo:16} & Pricing & fSV + Fast ($H > \half$) & $P^{(0)} + \phi^\eps + \eps^{1-H} P^{(1)}$  \\ 
		\cite{GaSo:17} & & fSV + Fast ($H < \half$) & $P^{(0)} + \sqrt{\eps} P^{(1)}$ \\ \hline
		\cite{FoSiZa:13} & & SE + Multiscale & $\vz + \sqrt{\eps} v^{(1,0)} + \sqrt{\delta} v^{(0,1)}$\tnote{$\dagger$}\\ 
		\cite{FoHu:16} &{\bf Nonlinear} & SE + Slow & AO of a zeroth order strategy \\
		\cite{Hu:18} & Portfolio & SE + Fast & AO of a zeroth order strategy \\ 
		\cite{FoHu:17} &Optimization & fSE + Slow ($H \in (0,1)$) & $\vz + \phi^\delta + \delta^H \vo$\\ 
		\cite{FoHu2:17} & & fSE + Fast ($ H > \half$) & $\vz + \phi^\eps + \eps^{1-H} \vo$ \\ 
		This paper & & fSE + Fast ($H < \half$) &  $\vz +\sqrt{\eps} \vo$\\ \hline		 		
	\end{tabular}
	  \begin{tablenotes}
	  	\item[$\ast$] We denote by $h^{(0)}$ the leading order term in the expansion, and by $h^{(0,1)}$, $h^{(1,0)}$, $h^{(1)}$ the first order corrections,  where $ h = P $ for option pricing, and $h = v$ for problem value of the optimization problem. The notation $\phi^\delta$ (\emph{resp.} $\phi^\eps$) means a random component of order $\delta^H$ (\emph{resp.} $\eps^{1-H}$).
	  	\item[$\dagger$] The expansion is heuristic except for the case of power utility and one factor.
	  \end{tablenotes}
	  \end{threeparttable}
\end{table}

\medskip
\noindent{\bf Main results.} In this paper, we focus on one-factor models and we study the effect of a fast time-scale on the optimal allocation problem under power utility. When the problem is Markovian, a PDE approach is preferred, since after a distortion transformation, firstly discovered in \cite{Za:99}, the PDE becomes linear where perturbation techniques usually work well. However, in the current setting, the fast factor is driven by a rough fBm and possesses short-range dependence.  Nevertheless, the nice MDT is available and gives a representation of the value process as well as the optimal strategy. This was proved by \cite{Te:04} via a conditional H\"{o}lder inequality, and by \cite{FrSc:08} via a BSDE approach in the case of exponential utility. Recently, it has been restated in \cite{FoHu:17} under the  setup \eqref{def_St} with a short proof based on a  verification argument. 

Starting by applying MDT, we obtain the representation of the problem value and the optimal portfolio. We then expand them using the ``ergodic property'' of $\Yh{t}$, and we deduce approximation results for both quantities. Unlike in the long-range dependent case $H > \half$ in \cite{FoHu2:17}, here, the first order correction to the value process appear at $\sqrt{\eps}$ and contains only one term, which is a explicit function of the state processes. And, surprisingly, there is no correction term at order $\sqrt{\eps}$ for the optimal strategy. However, we are still able to show that the leading order strategy $\pi_t^{(0)}$ itself generates the value process up to its first order $\sqrt{\eps}$ correction, which is obtained by the ``epsilon-martingale decomposition'' method. This approach was firstly introduced in \cite{FoPaSi:00,FoPaSi:01} and frequently used in problems with small parameters, especially in  non-Markovian settings as in \cite{GaSo:15,GaSo:16,GaSo:17}. 

We remark that our expansion is only valid when $H \in (0, \half)$ where $\Yh{t}$ does not appear in the leading order nor in the correction of the value process. Moreover, up to order $\sqrt{\eps}$, the expansion does not really depend on $H$, except through constants $\overline D$ and $\overline \lambda$, see \eqref{thm_Vtpowerexpansion} -\eqref{eq_Vtpower}. We also observe that the limit  $H \uparrow \half$ does not commute with the limit $\eps \to 0$ (see Section \ref{sec_comparisonMarkov}), which makes  impossible to recover the results in the Markovian case provided in \cite{FoSiZa:13}.

\medskip
\noindent{\bf Organization of the paper.} In Section~\ref{sec_SVmerton}, we revisit the martingale distortion transformation under general stochastic volatility models. This is derived in the Markovian case in \cite{Za:99}, and in non-Markovian settings in \cite{Te:04,FrSc:08,FoHu:17}. Then the fast mean-reverting rough fractional stochastic environment (RFSE) is precisely described, that is, $\Yh{t}$ follows a $\eps$-scaled fractional Ornstein--Uhlenbeck process. 
In Sections~\ref{sec_asympVy} and \ref{sec_asymppi}, we derive the asymptotic results under this modeling for the value process and optimal portfolio respectively. The asymptotic optimality of the leading order strategy $\pz$ is proved within all admissible strategies in Section \ref{sec_asymppzpower}. Finally, we compare the results with the Markovian case $H = \half$, and we comment on the influence of rough fractional model. We make conclusive remarks in Section~\ref{sec_conclusion}.

\section{Problem setup and preliminaries}\label{sec_SVmerton}

Denote by $S_t$ the risky asset price at time $t$ with both returns and volatility driven by a stochastic factor $Y_t$:
 \begin{align}
 \ud S_t = S_t\left[\mu(Y_t)\ud t  + \sigma(Y_t)\ud W_t\right]. \label{def_St}
 \end{align}
Here $Y_t$ is a general stochastic process adapted to the filtration $\MCG_t \equiv \sigma(W_s^Y, s\leq t)$ of a Brownian motion $W^Y$.  The two Brownian motions $W$ and $W^Y$ are imperfectly correlated:
\begin{equation}\label{def_BMcorrelation}
\ud \average{W_t, W_t^Y} = \rho \ud t, \quad \abs{\rho}<1.
\end{equation}
We define $(\MCF_t)$ as the natural filtration generated by the two Brownian motions $(W_t, W_t^Y)$, and we shall use $Y_t$ to model the one factor stochastic environment. Later, in Section~\ref{sec_asymppower}, $Y_t$ will be replaced by $\Yh{t}$, which will be a fast mean-reverting   fOU process. 

To formulate the Merton problem under such a stochastic environment, we introduce the following notations. Denote by $\pi_t \in \MCF_t$ the amount of money invested in the risky asset $S_t$ at time $t$, and by $X_t^\pi$ the corresponding wealth process. The rest, $X_t^\pi - \pi_t$ is put into the bank account earning a constant interest rate $r$. Then, under self-financing, and, without loss of generality, assuming a zero interest rate, the dynamics of $X_t^\pi$ is given by:
\begin{equation}
\ud X_t^\pi = \pi_t\mu(Y_t) \ud t + \pi_t\sigma(Y_t) \ud W_t.\label{def_XtunderY}
\end{equation}
In the Merton problem, the agent aims at finding the optimal allocation $\pi$ so as to optimize the expected utility of her terminal wealth $X_T^\pi$. Mathematically, it consists in identifying the value process $V_t$ defined by
\begin{equation}\label{def_Vt}
V_t \equiv \esssup_{\pi \in \MCA_t}\EE\left[U(X_T^\pi)\vert \MCF_t\right],
\end{equation}
and the corresponding optimal strategy $\pi^\ast$. Here $U(\cdot)$ is a utility function describing the agent's preference. Throughout this paper, we shall work with power utilities, namely 
\begin{equation}\label{def_power}
U(x) = \frac{x^{1-\gamma}}{1-\gamma}, \quad \gamma >0, \quad \gamma \neq 1.
\end{equation}
The set $\MCA_t$ is the collections of admissible strategies: 
\begin{equation}\label{def_MCA}
\MCA_t \equiv \left\{\pi \text{ is } (\MCF_t)\text{-adapted} : X_s^\pi  \text{ stays nonnegative } \forall s \geq t, \text{ given } \MCF_t\right\},
\end{equation}
with the following integrability condition:
	\begin{align}\label{assump_strategies}
	\sup_{t\in[0,T]}\EE\left[ \left(X_t^\pi\right)^{2p(1-\gamma)}\right] < +\infty,\; for \; some  \;p > 1, \quad and \quad \EE \left[\int_0^T\left(X_t^\pi\right)^{-2\gamma}    \pi_t^2  \sigma^2(Y_t)\ud t \right] < \infty.
	\end{align}
In addition, Assumption 2.1 in \cite{FoHu2:17} is enforced throughout the paper.

In \cite{FoHu:17}, the representation of value process \eqref{def_Vt} and optimal strategy $\pi^\ast$ are given via a \emph{martingale distortion transformation}. The rest of this section is  a preparation for the main results presented in the next section. The martingale distortion transformation is reviewed, and the fast mean-reverting RFSE is introduced. 

 

\subsection{Martingale distortion transformation}\label{sec_martdistort}
The martingale distortion transformation was derived in \cite{Te:04} with a slightly different utility function, and, recently, stated in \cite{FoHu:17} under the same setup as in this paper. Since the results in next section heavily rely on this transformation, for reader's convenience, we re-state it here. 

Let $\widetilde \PP$ be an equivalent probability measure determined by the Radon--Nikodym derivative
\begin{equation}\label{def_Ptilde}
	\frac{\ud \widetilde \PP}{\ud \PP} = \exp\left\{-\int_0^T a_s \ud W_s^Y - \half \int_0^T a_s^2 \ud s \right\},
\end{equation}
with $a_t = -\rho\left(\frac{1-\gamma}{\gamma}\right)\lambda(Y_t)$, and Sharpe-ratio $\lambda(\cdot) = \mu(\cdot)/\sigma(\cdot)$. 
%
Then, define the $\widetilde \PP$-martingale
		\begin{equation}\label{def_Mtmartdistort}
		M_t \equiv \widetilde \EE\left[\left.e^{\frac{1-\gamma}{2q\gamma}\int_0^T \lambda^2(Y_s)\ud s} \right\vert \MCG_t\right], 
		\end{equation}
		where the parameter $q$ is given in terms of the utility's relative risk-aversion $\gamma$ and the correlation coefficient $\rho$ by 
		\begin{equation}\label{def_q}
					q=\frac{\gamma}{\gamma+(1-\gamma)\rho^2}.
		\end{equation}
		The martingale $(M_t)_{t\in[0,T]}$ admits the representation
		\begin{equation}\label{def_xi}
		\ud M_t = M_t\xi_t \ud \widetilde W_t^Y.
		\end{equation}

\begin{prop}[Martingale Distortion Transformation]\label{prop_martdistort}
	Under model assumptions, the value process $V_t$ defined in \eqref{def_Vt} is expressed by
	\begin{equation}\label{def:Vcorrelated}
	V_t=\frac{X_t^{1-\gamma}}{1-\gamma} \left[\widetilde{\EE}\left(\left.e^{\frac{1-\gamma}{2q\gamma}\int_t^T\lambda^2(Y_s)\ud s}\right\vert \mc{G}_t\right)\right]^q.
	\end{equation}
	The expectation
	$\widetilde \EE[\cdot]$ is computed with respect to $\widetilde{\PP}$ introduced in \eqref{def_Ptilde}, and the parameter $q$ is given in \eqref{def_q}. 
	
	The optimal strategy $\pi^\ast$ is
	\begin{equation}\label{def_pioptimal}
	\pi^\ast_t = \left[\frac{\lambda(Y_t)}{\gamma \sigma(Y_t)} + \frac{\rho q \xi_t}{\gamma \sigma(Y_t)}\right] X_t,
	\end{equation}
	where $\xi_t$ is given by the Martingale Representation Theorem in \eqref{def_xi}.
\end{prop}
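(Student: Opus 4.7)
The plan is a standard verification argument with an exponential ansatz, following Tehranchi's approach adapted to \eqref{def_St}. Set $\alpha := \frac{1-\gamma}{2q\gamma}$ and $N_t := \widetilde\EE\bigl[e^{\alpha\int_t^T \lambda^2(Y_s)\ud s}\bigm\vert \MCG_t\bigr]$, so that $M_t = e^{\alpha\int_0^t\lambda^2(Y_s)\ud s}\,N_t$ and $N_T=1$. The candidate process is
\[
\widehat V_t^\pi \;:=\; \frac{(X_t^\pi)^{1-\gamma}}{1-\gamma}\, N_t^q.
\]
First I would show that $\widehat V^\pi$ is a true $\PP$-supermartingale on $[t,T]$ for every $\pi\in\MCA_t$, and a martingale when $\pi$ is the feedback \eqref{def_pioptimal}. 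Since $\widehat V_T^\pi = U(X_T^\pi)$, this identifies the right-hand side of \eqref{def:Vcorrelated} with the esssup in \eqref{def_Vt} and simultaneously confirms optimality of $\pi^\ast$.

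To obtain the $\PP$-dynamics of $N_t$, I would combine the martingale representation \eqref{def_xi} with the factorization $M_t = e^{\alpha\int_0^t\lambda^2\ud s}\,N_t$, which gives $\ud N_t = N_t[-\alpha\lambda^2(Y_t)\ud t + \xi_t\,\ud\widetilde W_t^Y]$. Girsanov applied to \eqref{def_Ptilde} yields $\ud\widetilde W_t^Y = \ud W_t^Y - \rho\tfrac{1-\gamma}{\gamma}\lambda(Y_t)\ud t$, so under $\PP$,
\[
\frac{\ud N_t}{N_t} \;=\; -\Bigl[\alpha\lambda^2(Y_t) + \rho\tfrac{1-\gamma}{\gamma}\lambda(Y_t)\xi_t\Bigr]\ud t + \xi_t\,\ud W_t^Y.
\]
Applying Itô to $\widehat V_t^\pi = Z_t N_t^q$ with $Z_t = (X_t^\pi)^{1-\gamma}/(1-\gamma)$ and $\pi_t = \phi_t X_t^\pi$, and using $\ud\average{W,W^Y}_t = \rho\,\ud t$ from \eqref{def_BMcorrelation}, the drift of $\ud\widehat V^\pi/\widehat V^\pi$ becomes a quadratic in $\phi_t$ with leading coefficient $-\half\gamma(1-\gamma)\sigma^2(Y_t)$ and linear coefficient $(1-\gamma)[\mu(Y_t) + \rho q\sigma(Y_t)\xi_t]$. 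For $\gamma\in(0,1)$ this quadratic is concave and pointwise maximized at
\[
\phi^\ast_t \;=\; \frac{\lambda(Y_t) + \rho q\xi_t}{\gamma\sigma(Y_t)},
\]
which is exactly the feedback in \eqref{def_pioptimal}; the case $\gamma>1$ works analogously, since the sign flip in $Z_t$ reverses the roles of ``$\sup$'' and ``$\max$''.

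The crux is then the algebraic check that at $\phi^\ast$ the total drift vanishes. After expanding $(\lambda+\rho q\xi)^2$ and collecting the $\lambda^2$, $\lambda\xi$ and $\xi^2$ contributions, the computation reduces to
\[
\frac{q\,\xi_t^2}{2\gamma}\Bigl[q\bigl(\gamma + (1-\gamma)\rho^2\bigr) - \gamma\Bigr] = 0,
\]
which holds exactly by the definition \eqref{def_q} of $q$. This cancellation is the main obstacle; all other steps are mechanical. With vanishing drift at $\phi^\ast$ and nonpositive drift elsewhere, $\widehat V^\pi$ is a local $\PP$-supermartingale (and a local martingale at $\pi^\ast$); the integrability conditions \eqref{assump_strategies}, combined with Assumption~2.1 of \cite{FoHu2:17}, promote this to a true (super)martingale. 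Taking conditional expectations and using $N_T=1$ then closes the verification and delivers both \eqref{def:Vcorrelated} and \eqref{def_pioptimal}.
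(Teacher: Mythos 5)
Your verification argument—distortion ansatz $\widehat V_t^\pi=\frac{(X_t^\pi)^{1-\gamma}}{1-\gamma}N_t^q$, Girsanov to get the $\PP$-dynamics of $N_t$, pointwise maximization of the drift quadratic in the fraction invested, and the cancellation forced by the definition \eqref{def_q} of $q$—is correct, and the algebra (including the residual term $\frac{q\xi_t^2}{2\gamma}\bigl[q(\gamma+(1-\gamma)\rho^2)-\gamma\bigr]=0$) checks out. This is essentially the same route as the paper, whose proof is by reference to the short verification-argument proof in \cite{FoHu:17}, so no further comparison is needed.
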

\begin{proof}
	See \cite[Proposition~2.2]{FoHu:17} for a detailed proof. Discussions about special cases (uncorrelated $\rho=0$, degenerate Sharpe-ratio $\lambda(y) = \lambda_0$, etc.) and generalization to multi-asset case can also be found therein. 
\end{proof}

\begin{rem}
The model assumptions consist of the existence of a unique strong solution to \eqref{def_St}, the coincidence between the filtration generated by $Y_t$ and $\MCG_t$, the regularity condition on $\lambda(\cdot)$ and integrability condition on $\xi_t$:
\begin{equation}\label{assump_xi}
\EE[e^{c\int_0^T \xi_t^2 \ud t}] < \infty, \text{ for some constant }c.
\end{equation}
They are actually the same as in our previous paper \cite{FoHu2:17}. So, for the sake of space, we omit the detailed description here, and refer to Assumption~2.1 and Remark~2.2 therein for further discussion.
\end{rem}

%
%
%
%
%
%

\subsection{Modeling of the fast mean-reverting RFSE}\label{sec_fastfOU}

To accommodate the fast mean-reverting property in the stochastic environment, we introduce the small parameter $\eps$ in the modeling of $Y_t$, and we switch to the notation  $\Yh{t}$ to emphasize this dependence. The other superscript $H$ comes from the Hurst index of the fractional Brownian motion that drives $\Yh{t}$, see \eqref{def_YhSDE}. Then, in the rest of this paper, $\Yh{t}$ plays the role of $Y_t$ in equation \eqref{def_St}. 

Following  \cite{GaSo:17}, we define our fast factor $\Yh{t}$ by 
\begin{equation}
\Yh{t}= \int_{-\infty}^t \kereps(t-s) \ud W_s^Y, \quad \kereps(t) = \frac{1}{\sqrt{\eps}}\mc{K}\left(\frac{t}{\eps}\right),\label{eq_Yh}
\end{equation}
where $\eps \ll 1$ is a small parameter,  $\MCK(t)$ is a non-negative kernel taking the form
\begin{equation}\label{def_kernel}
\mc{K}(t) = \frac{1}{\Gamma(H+\half)} \left[t^{H-\half} - a \int_0^t (t-s)^{H-\half}e^{-as}\ud s\right],
\end{equation}
with $H \in(0, \half)$, $(W_t^Y)_{t \in \RR^+}$ is the standard Brownian motion (Bm) that is correlated with $W_t$ as given in \eqref{def_BMcorrelation}, and $(W_t^Y)_{t \in \RR^-} := (B_{-t})_{t \in \RR^-}$ is another Bm independent of $(W_t^Y)_{t \in \RR^+}$ and $(W_t)$. The case $H = \half$ recovers the usual Markovian OU process.
Such $\Yh{t}$ is actually the unique (in distribution) stationary solution to the rescaled fractional Ornstein--Uhlenbeck (fOU) SDE:
\begin{equation}\label{def_YhSDE}
\ud \Yh{t} = -\frac{a}{\eps}\,\Yh{t} \ud t + \frac{1}{\eps^{H}}\ud \Wh{t},
\end{equation}
where $W_t^{(H)}$ is the fractional Brownian motion (fBm) with Hurst index $H$. Properties regarding this process (with or without scaling) has been widely studied, for instance, see \cite{ChKaMa:03,GaSo:17}. For the sake of simplicity, here, we only mention results that are related to our derivations in the sequel. 

The process $\Yh{t}$ is Gaussian with mean zero and (co)variance structure:
\begin{align}
& \EE\left[\left(\Yh{t}\right)^2\right]  \equiv \sigma_{ou}^2 = \half a^{-2H}\sin(\pi H)^{-1}, \quad \EE\left[\Yh{t}\Yh{t+s}\right] =\sigma^2_{ou}\MCC_Y\left(\frac{s}{\eps}\right), \\
& \MCC_Y(s) \equiv \frac{2\sin(\pi H)}{\pi} \int_0^\infty \cos(asx)\frac{x^{1-2H}}{1+x^2}\ud x.\label{eq_Zhcovar}
\end{align}
The form of covariance shows that the natural scale of $\Yh{t}$ is $\eps$ as desired. The variance of $\Yh{t}$ stays invariant, which validates the way $\MCK^\eps(t)$ is rescaled in \eqref{eq_Yh}.
The kernel $\MCK(t)$ is in $L^2$ with $\int_0^\infty \MCK^2(u) \ud u = \sigma_{ou}^2$, and  $\MCK(t) \in L^1$, when $H < \half$. 

Denote by $\overline\lambda^2$ the average with respect to the invariant distribution of fOU process $\mc{N}(0, \sigma_{ou}^2)$:
\begin{equation}\label{def_averagelambda}
\overline{\lambda}^2 \equiv \int_\RR \lambda^2(z) \frac{1}{\sqrt{2\pi}\sigma_{ou}}e^{-\frac{z^2}{2\sigma^2_{ou}}} \ud z
\end{equation}
Then, the following differences between time averages and  spacial averages are of importance in the derivations:
\begin{align}
& I_t^\eps \equiv \int_0^t \left( \lambda^2(\Yh{s}) - \overline{\lambda}^2\right) \ud s, \label{def_i}\\
& \phi_t^\eps \equiv \EE\left[\int_t^T \lambda^2(\Yh{s}) - \overline{\lambda}^2 \ud s \bigg\vert \MCG_t\right]. \label{def_phi}
\end{align}
By the ergodicity of $\Yh{t}$, these differences are small and of order $\eps^{1-H}$, see Lemma~\ref{lem_moments}. More properties and estimates regarding $\Yh{t}$ are also stated therein.

Furthermore, our fast factor $\Yh{t}$ satisfies the model assumption, and we state it as follows.

\begin{lem}\label{lem_xi}
	Under model assumptions, the fast mean-reverting stationary fractional Ornstein--Uhlenbeck process $\Yh{t}$ defined in \eqref{eq_Yh} satisfies the integrability assumption \eqref{assump_xi}.
\end{lem}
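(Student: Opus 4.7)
The plan is to establish a deterministic (albeit $\eps$-dependent) bound on $|\xi_t|$ valid for all $t\in[0,T]$ almost surely; once this is done, the exponential moment \eqref{assump_xi} is trivial. The structural input is the boundedness of $\lambda$ and $\lambda'$ provided by the model assumptions: writing $L:=\infnorm{\lambda}$, the exponent $F:=\frac{1-\gamma}{2q\gamma}\int_0^T\lambda^2(\Yh{s})\ud s$ is bounded by $\frac{|1-\gamma|L^2 T}{2q\gamma}$, so $e^F$, and hence $M_t$, lie between two positive deterministic constants $c_-$ and $c_+$.

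Next, I would work under the original measure $\PP$ by introducing the $\PP$-martingale
\begin{equation*}
N_t:=Z_t M_t=\EE\bigl[Z_T e^F\,\bigm|\,\MCG_t\bigr],\qquad Z_t:=\exp\left(-\int_0^t a_s\,\ud W_s^Y-\tfrac12\int_0^t a_s^2\,\ud s\right).
\end{equation*}
Using $\ud M_t=M_t\xi_t(\ud W_t^Y+a_t\,\ud t)$ and $\ud Z_t=-Z_t a_t\,\ud W_t^Y$, It\^o's product rule produces a clean cancellation of the drift, yielding $\ud N_t=N_t(\xi_t-a_t)\,\ud W_t^Y$, so that
\begin{equation*}
\xi_t = a_t+\frac{\zeta_t}{N_t},
\end{equation*}
where $\zeta_t$ is the integrand in the Brownian martingale representation of $N_t$ under $\PP$.

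Since $\Yh{}$ is a Gaussian functional of $W^Y$ through the moving-average representation \eqref{eq_Yh}, the Clark--Ocone formula applies and gives $\zeta_t=\EE[D_t^{W^Y}(Z_T e^F)\mid\MCG_t]$, with $D_t^{W^Y}\Yh{s}=\kereps(s-t)\mathbf{1}_{\{s\ge t\}}$. Expanding by the product and chain rules, $D_t^{W^Y}(Z_T e^F)/(Z_T e^F)$ decomposes into the ``$F$-part''
\begin{equation*}
\frac{1-\gamma}{q\gamma}\int_t^T\lambda(\Yh{s})\lambda'(\Yh{s})\kereps(s-t)\,\ud s
\end{equation*}
and the logarithmic derivative of the stochastic exponential $Z_T$, which contributes $-a_t-\int_t^T D_t^{W^Y}a_u\,\ud W_u^Y-\int_t^T a_u D_t^{W^Y}a_u\,\ud u$ with $D_t^{W^Y}a_u=-\rho\frac{1-\gamma}{\gamma}\lambda'(\Yh{u})\kereps(u-t)$. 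Boundedness of $\lambda,\lambda',a$ reduces the $L^1$ term to a multiple of $\int_0^\infty|\kereps(r)|\,\ud r$ and the $L^2$ terms (for both the It\^o integral and the deterministic integral involving $D_t a_u$) to multiples of $\int_0^\infty\kereps(r)^2\,\ud r=\sigma_{ou}^2$, all of which are finite because, by the comment following \eqref{eq_Zhcovar}, $\kereps\in L^1\cap L^2(\RR^+)$ in the rough regime $H<\tfrac12$. Taking $L^2(\PP)$ norms, combining with the upper bound on $e^F$ and Bayes' rule to pass between $\PP$ and $\widetilde\PP$, one obtains a deterministic constant $\kappa=\kappa(\eps,T,H)$ with $|\xi_t|\le\kappa$ for all $t\in[0,T]$ almost surely, whence $\EE[e^{c\int_0^T\xi_t^2\ud t}]\le e^{c\kappa^2 T}<\infty$ for every $c>0$.

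The main obstacle is the rigorous handling of the Malliavin derivative of $Z_T$: because the integrand $a_u=-\rho\frac{1-\gamma}{\gamma}\lambda(\Yh{u})$ of the driving It\^o integral is itself random and Malliavin-differentiable, one must justify the formula for $D_t^{W^Y}\!\!\int_0^T a_u\,\ud W_u^Y$ and then control the $L^2$ norm of the somewhat anticipative-looking term $\int_t^T D_t^{W^Y}a_u\,\ud W_u^Y$. This is precisely where the rough regime calls for a small modification of the argument in \cite{FoHu2:17}, since $\kereps$ is now singular at the origin; however, the singularity is square-integrable for $H<\tfrac12$, so the bounds survive unchanged and the modification is cosmetic rather than structural.
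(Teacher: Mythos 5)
Your structural reduction is fine (the product rule giving $\ud N_t=N_t(\xi_t-a_t)\ud W_t^Y$ is correct, and an almost-sure deterministic bound on $\xi$ would indeed make \eqref{assump_xi} trivial), but the step that is supposed to deliver that bound does not work as written. You control the Clark--Ocone integrand by "taking $L^2(\PP)$ norms": an $L^2$ bound on $\zeta_t$ (or on $\xi_t$) is \emph{not} an almost-sure bound, and mere $L^2$ (or even all-moments) control of $\xi$ is insufficient for the exponential moment $\EE[e^{c\int_0^T\xi_t^2\ud t}]<\infty$. The problematic object is exactly the one you flag at the end: $D_t^{W^Y}\log Z_T$ contains the stochastic integral $\int_t^T D_t^{W^Y}a_u\ud W_u^Y$, which is an unbounded random variable, so $D_t^{W^Y}(Z_Te^F)$ is not pointwise bounded and no unconditional norm estimate can convert it into a pathwise bound on $\xi_t$. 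Moreover $N_t=Z_tM_t$ is not bounded below by a deterministic constant ($Z_t$ is a stochastic exponential), so the division $\zeta_t/N_t$ must first be rewritten, via Bayes, as a conditional expectation under $\widetilde\PP$ weighted by $e^F$ (bounded above and below); only then can you bound the offending term \emph{pathwise} by conditional Cauchy--Schwarz and the conditional It\^o isometry given $\MCG_t$, which yields the deterministic bound $\bigl(\int_t^T\sup_y\abs{D_ta_u}^2\ud u\bigr)^{1/2}\le C\bigl(\int_0^\infty(\kereps)^2\bigr)^{1/2}=C\sigma_{ou}$. With that repair your argument closes, but note it only gives an $O(1)$ bound on $\xi_t$.

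This also marks the difference with the paper's (citation-based) proof: there the martingale representation is applied under $\widetilde\PP$ directly to $M_t=\widetilde\EE[e^F\vert\MCG_t]$, so $Z_T$ is never differentiated and no stochastic-integral term appears; the only input is the pathwise bound $\int_t^T\abs{\widetilde{\mathcal D}_t\Yh{s}}\ud s\le K\sqrt\eps$, coming from $\int_0^T\kereps(s)\ud s=O(\sqrt\eps)$ (the argument of \cite[Lemma~3.1]{FoHu:17}, cf.\ also the proof of Lemma~\ref{lem_psihat}). That route produces the stronger estimate $\abs{\xi_t}\le C\sqrt\eps$, which is not a luxury: it is exactly what Theorem~\ref{thm_piexpansion} later invokes from this lemma. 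Your $\PP$-measure route, even once patched with conditional estimates, only recovers $\abs{\xi_t}\le C$ (through $\ltwonorm{\kereps}=\sigma_{ou}$, which is order one), so it proves the integrability statement but not the $\sqrt\eps$ estimate the paper extracts from the same computation.
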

\begin{proof}
	This can be easily checked using the argument in \cite[Lemma~3.1]{FoHu:17} and the fact that $A^\eps(T) \equiv \int_0^T \MCK^\eps(s)\ud s$ is of order $\sqrt{\eps}$. Therefore, we omit the details here.
\end{proof}

\section{Merton problem under  fast-varying RFSE}\label{sec_asymppower}

In this section, we study the Merton problem \eqref{def_Vt} with  power utility, when the stochastic environment is modeled by $\Yh{t}$ with $H$ restricted to $H<\half$. By the nature of $\Yh{t}$, this nonlinear problem is non-Markovian. This immediately rules out the Hamilton-Jacobi-Bellman PDE approach, which is usually an efficient tool to analyze and find approximations to control problems with small parameters involved. Nevertheless, Proposition~\ref{prop_martdistort} is available and we will start with applying it to our problem and then find expansions based on it.


To be specific, we will give approximations of both the value process, denoted by $\Vy_t$ and the corresponding optimal strategy $\pi^\ast$. This is done by applying Proposition~\ref{prop_martdistort} with $Y_t = \Yh{t}$, then by expanding the expressions \eqref{def_Vy} based on the properties mentioned in Section~\ref{sec_fastfOU}. We also show that the ``leading order'' strategy alone can produce the given approximation of $\Vy_t$. 
Finally, we compare the results with the Markovian case, and we comment on the effects of taking into account the short-range dependence.

\subsection{First order approximation to the value process}\label{sec_asympVy}
Let $S_t$ be the price of a risky asset in the fast mean-reverting RFSE:
\begin{align}\label{def_StfastOU}
\ud S_t = S_t\left[\mu(\Yh{t})\ud t  + \sigma(\Yh{t})\ud W_t\right],
\end{align}
where $\Yh{t}$ is the $\eps$-scaled stationary fOU process with $H < \half$,  defined in \eqref{eq_Yh}. Accordingly, the wealth process $X_t^\pi$ becomes
\begin{equation}\label{def_XtunderfastY}
\ud X_t^\pi = \pi_t\mu(\Yh{t}) \ud t + \pi_t\sigma(\Yh{t}) \ud W_t,
\end{equation}
and the value process is denoted by $\Vy_t$:
\begin{equation}
\Vy_t \equiv \esssup_{\pi \in \MCA_t^\eps}\EE\left[U(X_T^\pi)\vert \MCF_t\right].
\end{equation}
We add the superscript $\eps$ to the problem value $\Vy_t$ and to the admissible set $\MCA_t^\eps$ to emphasize the dependence on $\eps$ brought by $\Yh{t}$.

Directly applying Proposition~\ref{prop_martdistort} with $Y_t = \Yh{t}$ gives the following expression for $\Vy_t$:
\begin{equation}\label{def_Vy}
\Vy_t = \frac{X_t^{1-\gamma}}{1-\gamma}\left[\widetilde \EE\left(e^{\frac{1-\gamma}{2q\gamma}\int_t^T \lambda^2(\Yh{s})\ud s} \Big\vert \MCG_t\right)\right]^q.
\end{equation}

\begin{theo}\label{thm_Vtpowerexpansion}
In the regime of $\eps$ small, under model assumptions, for fixed $t \in [0,T)$, $\Vy_t$ takes the form
\begin{equation}\label{eq_Vtpower}
\Vy_t = Q^\eps_t(t, X_t) + o(\sqrt\eps),
\end{equation}
where 
\begin{equation}\label{def_Qeps}
Q^\eps(t,x) = \vz(t,x) + \sqrt\eps \rho\overline D \vo(t,x),
\end{equation}
with $\vz$ and $\vo$ defined as
\begin{equation}
\vz(t,x) \equiv \frac{x^{1-\gamma}}{1-\gamma}e^{\frac{1-\gamma}{2\gamma}\overline \lambda^2(T-t)}, \quad \text{and} \quad \vo(t,x) \equiv \frac{1-\gamma}{\gamma^2}(T-t)x^{1-\gamma}e^{\frac{1-\gamma}{2\gamma}\overline{\lambda}^2(T-t)},
\end{equation}
and the coefficient $\overline D$ defined by 
\begin{equation}\label{def_Dbar}
\overline D \equiv \int_0^{\infty}\left[\iint_{\RR^2} \lambda(\sigma_{ou} z)(\lambda\lambda')(\sigma_{ou}z') p_{C_Y(s)}(z,z') \ud z \ud z'\right]\mc{K}(s) \ud s.
\end{equation}
The function $p_C(z,z')$ is the pdf of the bivariate normal distribution with mean zero and covariance matrix $\begin{bmatrix}
1 &C\\
C & 1
\end{bmatrix}$, and $C_Y(s)$ is given in \eqref{eq_Zhcovar}. As usual, the notation $o(\sqrt{\eps})$ denotes an $\MCF_t$-adapted random variable whose order is higher than $\sqrt\eps$ in the $L^p$ sense, for any $1 \leq p < 2(1-H)$.
\end{theo}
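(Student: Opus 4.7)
Starting from the martingale distortion representation \eqref{def_Vy}, the task reduces to expanding, for small $\eps$,
$$\widetilde\EE\bigl[e^{\frac{1-\gamma}{2q\gamma}\int_t^T\lambda^2(\Yh{s})\ud s}\,\big|\,\MCG_t\bigr]$$
to order $\sqrt\eps$. The key observation is that the intrinsic ergodic fluctuations of $\lambda^2(\Yh{})$ around $\overline\lambda^2$ are of order $\eps^{1-H}$ by Lemma~\ref{lem_moments}, and since $H<\half$ this is $o(\sqrt\eps)$; so these fluctuations contribute only to the remainder. The $\sqrt\eps$ correction must instead come from the change of measure used to define $\widetilde\PP$, which shifts $\Yh{}$ by a drift of precisely that size.

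\textbf{Decomposition under $\widetilde\PP$.} By Girsanov, $\widetilde W^Y_t:=W^Y_t+\int_0^t a_u\ud u$ with $a_u=-\rho\tfrac{1-\gamma}\gamma\lambda(\Yh{u})$ is a Brownian motion under $\widetilde\PP$, so the representation \eqref{eq_Yh} gives, for $s\ge 0$,
$$\Yh{s}=\overline\Yh{s}-A_s^\eps,\qquad A_s^\eps:=\int_0^s\kereps(s-u)\,a_u\ud u,$$
where $\overline\Yh{}$ is a stationary fOU under $\widetilde\PP$. Since $\MCK\in L^1$ exactly in the rough regime $H<\half$, and $\lambda$ is controlled by the model assumptions, $A_s^\eps$ is of size $\sqrt\eps$. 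Taylor expanding,
$$\lambda^2(\Yh{s})=\lambda^2(\overline\Yh{s})-2(\lambda\lambda')(\overline\Yh{s})A_s^\eps+\mathcal R_s^\eps,\qquad\mathcal R_s^\eps=O((A_s^\eps)^2),$$
the zeroth-order integral $\int_t^T\lambda^2(\overline\Yh{s})\ud s=\overline\lambda^2(T-t)+O(\eps^{1-H})$ produces, after exponentiation, raising to the power $q$, and multiplication by $X_t^{1-\gamma}/(1-\gamma)$, precisely the leading term $\vz(t,X_t)$.

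\textbf{The $\sqrt\eps$ correction.} For the cross term $-2\int_t^T(\lambda\lambda')(\overline\Yh{s})A_s^\eps\ud s$, one may replace $a_u$ by $-\rho\tfrac{1-\gamma}\gamma\lambda(\overline\Yh{u})$ up to higher-order terms. The joint law of $(\overline\Yh{u}/\sigma_{ou},\overline\Yh{s}/\sigma_{ou})$ under $\widetilde\PP$ is bivariate Gaussian with correlation $\MCC_Y((s-u)/\eps)$, hence
$$\widetilde\EE\bigl[(\lambda\lambda')(\overline\Yh{s})\lambda(\overline\Yh{u})\bigr]=\iint_{\RR^2}(\lambda\lambda')(\sigma_{ou}z')\lambda(\sigma_{ou}z)\,p_{\MCC_Y((s-u)/\eps)}(z,z')\ud z\ud z'.$$
Changing variables $v=(s-u)/\eps$ in the kernel integral, and using $\kereps(\cdot)=\eps^{-1/2}\MCK(\cdot/\eps)$, produces an explicit $\sqrt\eps$ prefactor times a Riemann-type integral that converges to $\overline D$ from \eqref{def_Dbar}. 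Integrating over $s\in[t,T]$, expanding $e^{(\cdot)}\approx 1+(\cdot)$, raising to the power $q$, and multiplying by $X_t^{1-\gamma}/(1-\gamma)$, the algebraic simplification yields exactly $\sqrt\eps\,\rho\overline D\,\vo(t,X_t)$ with its prefactor $\tfrac{1-\gamma}{\gamma^2}(T-t)$.

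\textbf{Main obstacle.} The hard part is upgrading this formal expansion to an $L^p$ estimate with $1\le p<2(1-H)$ as required by the $o(\sqrt\eps)$ in the statement. This will demand uniform exponential-moment control on $\Yh{}$, $\overline\Yh{}$, $\xi_t$ and on $\mathrm{d}\widetilde\PP/\mathrm{d}\PP$ (from Gaussianity of the fOU together with Lemma~\ref{lem_xi}), quadratic bounds on the Taylor remainder $\mathcal R_s^\eps$, control of the error incurred by replacing $\lambda(\Yh{u})$ by $\lambda(\overline\Yh{u})$ inside $A_s^\eps$, and verification that the kernel-time integral converges to $\overline D$ strongly enough to commute with the outer conditional expectation and with the integration over $s$ (in particular handling the boundary layer $s-t=O(\eps)$). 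All these estimates ultimately rest on the moment bounds of Lemma~\ref{lem_moments} and on the $L^1$-integrability of $\MCK$ available only for $H<\half$, which is why the argument does not extend to $H\ge\half$ and why the limits $H\uparrow\half$ and $\eps\downarrow 0$ do not commute.
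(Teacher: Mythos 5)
Your proposal is correct and follows essentially the same route as the paper: the Girsanov decomposition $\Yh{s}=\Yht{s}-A_s^\eps$ with $\abs{A_s^\eps}\lesssim\sqrt{\eps}$ (available precisely because $\MCK\in L^1$ for $H<\half$), the Taylor expansion of $\lambda^2$ around the $\widetilde\PP$-stationary factor, the bivariate-Gaussian/kernel computation producing the $\sqrt{\eps}\,\overline D$ term, and the observation that the purely ergodic fluctuations are $\MCO(\eps^{1-H})=o(\sqrt{\eps})$ are exactly the ingredients of the paper's expansion of $\widehat\psi_t^\eps$ and of Lemma~\ref{lem_moments}. The estimates you defer in your last paragraph (the $L^p$ remainder control for $1\le p<2(1-H)$, the $\MCO(\eps)$ cost of replacing $\lambda(\Yh{u})$ by $\lambda(\Yht{u})$, and the uniform convergence of the kernel-time integral to $\overline D$) are precisely what the paper's Lemmas~\ref{lem_moments}--\ref{lem_psihat} and the process $\kappa_t^\eps$ supply.
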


\begin{proof}
Given the representation \eqref{def_Vy} for $\Vy$, the expansion result \eqref{eq_Vtpower}-\eqref{def_Qeps} can be obtained by firstly expanding 
\begin{equation}\label{def_Psi}
\Psi_t^\eps \equiv \widetilde \EE\left[e^{\frac{1-\gamma}{2q\gamma}\int_t^T \left(\lambda^2(\Yh{s})-\overline \lambda^2\right)\ud s}\Big\vert \MCG_t\right],
\end{equation}
and then applying Taylor formula to the function $x^q$. 

Using the fact that $I_t^\eps$ is ``small'' and Taylor expansion of $e^x$ in $x$, one deduces 
\begin{align}
\Psi_t^\eps
& = \widetilde\EE\left[ 1 + \frac{1-\gamma}{2q\gamma}\int_t^T \left(\lambda^2(\Yh{s})-\overline{\lambda}^2\right)\ud s + R_{[t,T]} \Big\vert\MCG_t \right]\nonumber\\
& = 1 + \frac{1-\gamma}{q\gamma}\widetilde\EE\left[\half\int_t^T \left(\lambda^2(\Yh{s})-\overline{\lambda}^2\right)\ud s \Big\vert\MCG_t \right] + \widetilde \EE\left[ R_{[t,T]} \vert \MCG_t\right]\label{eq_Psi},
\end{align}
where $
R_{[t,T]} = e^{\chi}\left[\frac{1-\gamma}{2q\gamma}\int_t^T \left(\lambda^2(\Yh{s}) - \overline{\lambda}^2\right) \ud s \right]^2$
with $\chi$ being the bounded Lagrange remainder. Observing that $\pnorm{\widetilde \EE[R_{[t,T]}\vert \MCG_t]}  \sim \pnorm{R_{[t,T]}}$ by the conditional H\"{o}lder inequality and the boundedness of $\lambda$, and that $\int_0^T \left(\lambda^2(\Yh{s}) - \overline \lambda^2\right) \ud s \sim o(\sqrt{\eps})$ in $L^2$, we claim that $\widetilde \EE\left[ R_{[t,T]} \vert \MCG_t\right]$ is of order higher than $\sqrt\eps$ in $L^p$ for $1 \leq p < 2(1-H)$.

Define the $\widetilde \PP$-martingale $\widehat \psi_t^\eps$ with its martingale representation by 
\begin{equation}\label{def_psihat}
\widehat\psi_t^\eps = \widetilde\EE\left[\int_0^T G(\Yh{s})\ud s \Big\vert\MCG_t \right],  \quad G(y) = \half(\lambda^2(y) - \overline{\lambda}^2), \quad \ud \widehat \psi_t^\eps = \widehat \vartheta_t^\eps \ud \widetilde W_t^Y.
\end{equation}
It remains to find the expansion of $\widehat \psi_t^\eps$ up to order $\sqrt{\eps}$.
To condense the notation in the following derivation, we also define
\begin{align}
&\psi_t^\eps \equiv \EE\left[\half\int_0^T \left(\lambda^2(\Yh{s}) - \overline\lambda^2\right) \ud s \Big\vert \MCG_t\right],\label{def_psi}\\
& \vartheta_t^\eps \equiv \int_t^T \EE\left[G'(\Yh{s})\vert\MCG_t\right]\kereps(s-t)\ud s, \label{def_vartheta}\\
&\widetilde \psi_t^\eps \equiv \widetilde \EE\left[\half\int_0^T \left(\lambda^2(\Yht{s}) - \overline\lambda^2\right) \ud s \Big\vert \MCG_t\right],\label{def_psitilde}\\
&\widetilde\vartheta_t^\eps \equiv \int_t^T \widetilde \EE[G'(\Yht{s})\vert\MCG_t]\kereps(s-t)\ud s, \label{def_varthettilde}
\end{align}
where $\psi_t^\eps$ is a $\PP$-martingale satisfying $\ud \psi_t^\eps = \vartheta_t^\eps \ud W_t^Y$, see Lemma~\ref{lem_moments}\eqref{lem_psi} for details. By a similar argument, we have the $\widetilde \PP$-martingale $\widetilde \psi_t^\eps$ admitting the representation $\ud \widetilde \psi_t^\eps = \widetilde\vartheta_t^\eps \ud \widetilde W_t^Y$. The difference between $ \vartheta_t^\eps$ and $ \widetilde \vartheta_t^\eps$ is of order $\eps$ and this is discussed in Lemma~\ref{lem_comparison}\eqref{lem_varthetatilde}.

With all above preparations, we deduce:
\begin{align*}
\widehat \psi_t^\eps &= \widetilde \EE\left[\int_0^T G(\Yh{s})\ud s \Big \vert \MCG_0\right] + \int_0^t \widehat \vartheta_s^\eps \ud \widetilde W_s^Y \\
&\hspace{300pt}(\text{Taylor expanding } G(y) \text{ at } y = \Yht{s})\\
&  = \widetilde \EE\left[\int_0^T G(\Yht{s})\ud s \Big \vert \MCG_0\right] +  \widetilde \EE\left[\int_0^T G'(\Yht{s})(\Yh{s} - \Yht{s})\ud s \Big \vert \MCG_0\right]  \\
&\quad  +  \widetilde \EE\left[\int_0^T G''(\chi_s)(\Yh{s} -\Yht{s})^2\ud s \Big \vert \MCG_0\right] + \int_0^t \vartheta_s^\eps \ud \widetilde W_s^Y +  \int_0^t (\widehat \vartheta_s^\eps -  \vartheta_s^\eps) \ud \widetilde W_s^Y \\
&\hspace{120pt} (\Yh{s} - \Yht{s} \sim O(\sqrt\eps), \widehat \vartheta_s^\eps -  \vartheta_s^\eps \sim O(\eps) \text{ and the relation between } W^Y \text{ and } \widetilde W^Y)\\
& = \widetilde \EE\left[\int_0^T G(\Yht{s})\ud s \Big \vert \MCG_0\right] +  \widetilde \EE\left[\int_0^T G'(\Yht{s})\int_0^s \MCK^\eps(s-u)\rho\left(\frac{1-\gamma}{\gamma}\right)\lambda(\Yh{u})\ud u\ud s \Big \vert \MCG_0\right] \\
& \quad + \int_0^t \vartheta_s^\eps \ud W_s^Y -  \int_0^t \vartheta_s^\eps \rho\left(\frac{1-\gamma}{\gamma}\right)\lambda(\Yh{s}) \ud s  + \MCO(\eps) \\
&\hspace{275pt} (\Yht{s}\vert \MCG_0 \stackrel{\MCD}{=} \Yh{s}\vert \MCG_0 \text{ and the definition of } \kappa_t^\eps)
\end{align*}
\begin{align*}
\hspace{11pt}& = \EE\left[\int_0^T G(\Yh{s})\ud s \Big \vert \MCG_0\right] +  \int_0^t \vartheta_s^\eps \ud W_s^Y + \rho\left(\frac{1-\gamma}{\gamma}\right)\widetilde \EE\left[\int_0^T \int_u^T \widetilde \EE[G'(\Yht{s})\vert\MCG_u] \MCK^\eps(s-u)\ud s \lambda(\Yh{u})\ud u \Big \vert \MCG_0\right] \\
& \quad - \rho\left(\frac{1-\gamma}{\gamma}\right)( \sqrt{\eps}\,\overline D t + \kappa_t^\eps) + \MCO(\eps)\\
& \hspace{250pt}(\text{expression of } \psi_t^\eps \text{ and } \widetilde\vartheta_t^\eps, \text{ and } \ltwonorm{\kappa_t^\eps} \sim o(\sqrt{\eps}))\\
& = \psi_t^\eps +\rho\left(\frac{1-\gamma}{\gamma}\right)\widetilde \EE\left[\int_0^T \widetilde \vartheta_u^\eps \lambda(\Yh{u})\ud u \Big \vert \MCG_0\right] -  \rho\left(\frac{1-\gamma}{\gamma}\right) \sqrt{\eps}\,\overline D t + o(\sqrt{\eps}) \\
& \hspace{350pt}(\vartheta_t^\eps - \widetilde \vartheta_t^\eps \sim \MCO(\eps))\\
& = \psi_t^\eps +\rho\left(\frac{1-\gamma}{\gamma}\right)\widetilde \EE\left[\int_0^T  \vartheta_u^\eps \lambda(\Yh{u})\ud u \Big \vert \MCG_0\right] -  \rho\left(\frac{1-\gamma}{\gamma}\right) \sqrt{\eps}\,\overline D t + o(\sqrt{\eps}) \\
& =  \psi_t^\eps +  \rho\left(\frac{1-\gamma}{\gamma}\right) \sqrt{\eps}\,\overline D(T-t) +\rho\left(\frac{1-\gamma}{\gamma}\right)\widetilde \EE\left[\int_0^T  \vartheta_u^\eps  \lambda(\Yh{u}) - \sqrt{\eps} \, \overline D\ud u \Big \vert \MCG_0\right] +  o(\sqrt{\eps})\\
&\hspace{350pt} (\ltwonorm{\kappa_t^\eps} \sim o(\sqrt{\eps})) \\
& =  \psi_t^\eps +  \rho\left(\frac{1-\gamma}{\gamma}\right) \sqrt{\eps}\,\overline D(T-t) +  o(\sqrt{\eps}).
\end{align*}
All reasonings are mentioned in the parentheses from line to line and detailed statements can be found in Lemmas~\ref{lem_moments}--\ref{lem_psihat}. Subtracting $\int_0^t G(\Yh{u}) \ud u$ from both sides of the above expansion, together with \eqref{eq_Psi}, \eqref{def_phi} and \eqref{def_psi}, brings
\begin{align}
\Psi_t^\eps &= 1 + \frac{1-\gamma}{q\gamma} \left( \phi_t^\eps  + \sqrt\eps\rho\left(\frac{1-\gamma}{\gamma}\right) \overline D (T-t)\right) + o(\sqrt\eps) \nonumber\\
& =  1 + \frac{1-\gamma}{q\gamma} \sqrt\eps\rho\left(\frac{1-\gamma}{\gamma}\right) \overline D (T-t) + o(\sqrt\eps).
\label{eq_Psiexpansion}
\end{align}
The last step follows from  $\phi_t^\eps \sim \MCO(\eps^{1-H})$ (see Lemma~\ref{lem_moments}\eqref{lem_phi}). Now, Taylor expanding $x^q$ produces the desired result
\begin{align*}
\Vy_t&= \frac{X_t^{1-\gamma}}{1-\gamma}e^{\frac{1-\gamma}{2\gamma}\overline{\lambda}^2(T-t)} \left(\Psi^\eps_t\right)^q \\
& =\frac{X_t^{1-\gamma}}{1-\gamma}e^{\frac{1-\gamma}{2\gamma}\overline{\lambda}^2(T-t)} \left\{1 +  \sqrt\eps\rho\left(\frac{1-\gamma}{\gamma}\right)^2\overline D (T-t)
\right\} + o(\sqrt{\eps}),
\end{align*}
where $o(\sqrt{\eps})$ is in $L^p$ sense, for any $1 \leq p < 2(1-H)$.
\end{proof}

Note that, unlike in the long-range dependent $H > 1/2$ case studied in \cite{FoHu2:17} where  the first order corrections consist of two terms (one random component $\phi_t^\eps$ and one deterministic function in $(t, X_t)$) at order $\eps^{1-H}$, 
here, the correction appears at order $\sqrt{\eps}$ and contains only a deterministic function of $(t, X_t)$. In other words, except for the constant $\overline{D}$ and $\overline{\lambda}$, neither the fast factor $\Yh{t}$ nor the Hurst index $H$ is visible in the leading order term nor in the first order correction.

\subsection{First order expansion of the optimal strategy}\label{sec_asymppi}
The optimal portfolio  $\pi^\ast$ is also of interest, if not the most important quantity. Under the RFSE described by $\Yh{t}$, using the results in Proposition~\ref{prop_martdistort}, the optimal strategy \eqref{def_pioptimal}  takes the form
\begin{equation}\label{def_pioptimalunderfOU}
\pi^\ast_t = \left[\frac{\lambda(\Yh{t})}{\gamma \sigma(\Yh{t})} + \frac{\rho q \xi_t}{\gamma \sigma(\Yh{t})}\right] X_t.
\end{equation}
The term $\xi_t$ is defined by the martingale representation of $\widetilde \PP$-martingale $M_t$, and is in general not known explicitly. This brings extra difficulty when one wants to implement the optimal strategy $\pi^\ast$ to attain the problem value. However, at least in the regime of $\eps$ small, we can give the following approximation result for $\xi$ and $\pi^\ast$.
\begin{theo}\label{thm_piexpansion}
Under model assumptions, we have the following approximation of the optimal strategy $\pi_t^\ast$:
{
\begin{equation}
\left(\EE\int_0^T \abs{\pi^\ast_t - \pi_t^{(0)}}^p \ud t \right)^{1/p} \sim o(\sqrt{\eps}), \quad \forall 1 \leq p < 2(1-H),
\end{equation}
where $\pi_t^{(0)}$ is the leading order strategy:
\begin{equation}\label{def_pz}
\pi_t^{(0)} := \frac{\lambda(\Yh{t})}{\gamma \sigma(\Yh{t})}X_t. 
\end{equation}}
\end{theo}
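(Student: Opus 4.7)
The plan is to exploit the fact that, by Theorem~\ref{thm_Vtpowerexpansion}, the leading behavior of $\Psi_t^\eps$ (and hence of the $\widetilde\PP$-martingale $M_t$) up to order $\sqrt\eps$ is purely deterministic. Since only genuinely random fluctuations contribute to the diffusion coefficient $\xi_t$ in \eqref{def_xi}, one expects $\xi_t$ itself to be of order $o(\sqrt\eps)$ in a suitable $L^p$ sense, which then transfers to $\pi^\ast - \pi^{(0)}$.

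First I would reduce the claim to an $L^p(dt\,d\omega)$ bound on $\xi$ alone. Combining \eqref{def_pioptimalunderfOU} with \eqref{def_pz} gives
\begin{equation*}
\pi_t^\ast - \pi_t^{(0)} = \frac{\rho q\,\xi_t\,X_t}{\gamma\,\sigma(\Yh{t})}.
\end{equation*}
Since $\sigma$ is bounded away from zero and $X_t$ has finite moments uniformly in $t$ by \eqref{assump_strategies}, Hölder's inequality (with a slightly enlarged exponent $p'$ still below $2(1-H)$) reduces the task to
\begin{equation*}
\Bigl(\EE\int_0^T |\xi_t|^{p'}\,dt\Bigr)^{1/p'} = o(\sqrt\eps).
\end{equation*}

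To control $\xi_t$, I would introduce the auxiliary $\widetilde\PP$-martingale
\begin{equation*}
\Phi_t^\eps := \widetilde\EE\bigl[\exp\bigl(\tfrac{1-\gamma}{2q\gamma}\,I_T^\eps\bigr)\,\big|\,\MCG_t\bigr],
\end{equation*}
with $I_T^\eps$ from \eqref{def_i}. Because $\Phi_t^\eps$ differs from $M_t$ in \eqref{def_Mtmartdistort} only by the deterministic factor $e^{-\frac{1-\gamma}{2q\gamma}\overline\lambda^2 T}$, its representation $d\Phi_t^\eps = \Phi_t^\eps \xi_t\,d\widetilde W_t^Y$ uses the same $\xi_t$. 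The crucial advantage is that $\lambda$ bounded makes $\exp(\tfrac{1-\gamma}{2q\gamma}I_T^\eps)$ deterministically bounded above and below by positive constants, so $\Phi_t^\eps$ inherits a two-sided uniform bound in $\eps,t,\omega$. The Burkholder--Davis--Gundy inequality combined with this lower bound then yields
\begin{equation*}
\Bigl\|\Bigl(\int_0^T \xi_s^2\,ds\Bigr)^{1/2}\Bigr\|_{L^{p'}} \lesssim \|\Phi_T^\eps - \Phi_0^\eps\|_{L^{p'}}.
\end{equation*}

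The sharp estimate $\|\Phi_T^\eps - \Phi_0^\eps\|_{L^{p'}} = o(\sqrt\eps)$ is the heart of the proof: the pointwise bound $|\Phi_T^\eps - 1| \leq C\,|I_T^\eps|$ (from $|e^x - 1|\leq |x|e^{|x|}$ and the deterministic boundedness of $I_T^\eps$) combines with the moment estimate $\|I_T^\eps\|_{L^{p'}} = O(\eps^{1-H})$ from Lemma~\ref{lem_moments}; here $\eps^{1-H} = o(\sqrt\eps)$ precisely because $H<\tfrac12$, and $|\Phi_0^\eps - 1| = |\widetilde\EE(\Phi_T^\eps - 1)|$ is controlled the same way. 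Since $p' < 2(1-H) < 2$, Jensen's inequality applied to the uniform measure $dt/T$ on $[0,T]$ gives pathwise
\begin{equation*}
\Bigl(\int_0^T |\xi_s|^{p'}\,ds\Bigr)^{1/p'} \leq T^{1/p' - 1/2}\,\Bigl(\int_0^T \xi_s^2\,ds\Bigr)^{1/2},
\end{equation*}
and taking $L^{p'}(\omega)$-norms closes the argument. The main obstacle is obtaining the rate $O(\eps^{1-H})$ in $L^{p'}$: this reflects the short-range mixing of $\Yh{t}$ specific to $H<\tfrac12$ via Lemma~\ref{lem_moments}, and is the very reason $p'$ must stay below $2(1-H)$; it also explains why the $\sqrt\eps$ contribution visible in Theorem~\ref{thm_Vtpowerexpansion} does not appear in the strategy. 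Exchanging $\PP$ and $\widetilde\PP$ in the $L^p$ norms is harmless since the Radon--Nikodym derivative \eqref{def_Ptilde} has moments of all orders under the boundedness of $\lambda$.
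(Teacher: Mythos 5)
Your reduction of the problem to an $L^{p'}$ bound on $\xi$ and your change-of-measure remark are fine, but the central estimate of your argument is not. The claim $\lVert I_T^\eps\rVert_{L^{p'}}=O(\eps^{1-H})$ misreads Lemma~\ref{lem_moments}: part~\eqref{lem_i} bounds the \emph{second moment}, $\EE[(I_t^\eps)^2]\le K\eps^{1-H}$, which only gives $\lVert I_T^\eps\rVert_{L^2}=O(\eps^{(1-H)/2})$, a quantity \emph{larger} than $\sqrt\eps$ for $H\in(0,\half)$; the rate $\eps^{1-H}$ belongs to the conditional quantity $\phi_t^\eps$ of part~\eqref{lem_phi}, not to $I_T^\eps$ itself. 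Worse, the quantity you need to be $o(\sqrt\eps)$ is in general of exact order $\sqrt\eps$: with $G(y)=\half(\lambda^2(y)-\overline\lambda^2)$ one has $\half I_T^\eps=\psi_T^\eps=\psi_0^\eps+\int_0^T\vartheta_s^\eps\,\ud W_s^Y$, where $\psi_0^\eps=O(\eps^{1-H})$ but $\vartheta_t^\eps$ is pointwise of exact order $\sqrt\eps$ (Lemma~\ref{lem_moments}\eqref{lem_psi} gives $\EE[\lambda(\Yh{t})\vartheta_t^\eps]=\sqrt\eps\,\overline D+o(\sqrt\eps)$ with $\overline D\neq0$ in general, so $\EE\abs{\vartheta_t^\eps}\gtrsim\sqrt\eps$). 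Hence $\Phi_T^\eps-\Phi_0^\eps$ is of order $\sqrt\eps$ and $\int_0^T\xi_s^2\,\ud s$ is of order $\eps$, not smaller: your intermediate claim $\lVert\Phi_T^\eps-\Phi_0^\eps\rVert_{L^{p'}}=o(\sqrt\eps)$ is false, not merely unproven, and at best your BDG/Doob/Jensen chain reproduces the bound $\lVert\xi\rVert_{L^{p'}}=O(\sqrt\eps)$, which is already contained in the pointwise estimate $\abs{\xi_t}\le C\sqrt\eps$ of Lemma~\ref{lem_xi} and is strictly weaker than what the theorem asserts.

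The deeper reason the square-function route cannot work is that it is insensitive to signs, whereas the smallness exploited in the paper is a cancellation in time. The paper writes $M_t=\Psi_t^\eps$ times an explicit finite-variation factor, uses the expansion $\Psi_t^\eps=a_\eps(t)+R_\eps(t)$ from Theorem~\ref{thm_Vtpowerexpansion} with $a_\eps$ deterministic and $R_\eps(t)=o(\sqrt\eps)$ in $L^p$ for each $t$, and concludes via It\^o that the \emph{signed} integral $\int_0^t\xi_u\,\ud u$ is $o(\sqrt\eps)$; only then does it combine this with the uniform bound $\abs{\xi_t}\le C\sqrt\eps$ and a H\"older interpolation. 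Heuristically, $\xi_t$ behaves like $\sqrt\eps$ times a rapidly oscillating, essentially centered functional of $\Yh{t}$ (the centering reflects $\int_0^\infty\MCK(s)\,\ud s=0$ when $H<\half$), so time integrals of $\xi$ average out while $\int_0^T\xi_s^2\,\ud s$ does not. Any viable argument must retain this sign cancellation; passing through the quadratic variation of $M$ (or through the size of $I_T^\eps$) discards it at the outset.
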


\begin{proof}
This is done by obtaining the expansion of $\xi_t$ from its definition \eqref{def_xi}. We rewrite $M_t$ in terms of $\Psi_t^\eps$ by comparing \eqref{def_Mtmartdistort} to \eqref{def_Psi}, 
\begin{equation*}
M_t = \Psi_t^\eps \; e^{\frac{1-\gamma}{2q\gamma}\int_0^t \lambda^2(\Yh{s})\ud s} \; e^{\frac{1-\gamma}{2q\gamma}\overline{\lambda}^2(T-t)},
\end{equation*}
and then,  we use the approximation \eqref{eq_Psiexpansion} of $\Psi_t^\eps$. {It has been shown that 
\begin{align*}
\Psi_t^\eps = a_\eps(t) + R_\eps(t),
\end{align*}
where $a_\eps(t) = 1 + \frac{1-\gamma}{q\gamma}\sqrt{\eps}\rho\left(\frac{1-\gamma}{\gamma}\right)\overline D (T-t)$ is of finite variation, and $R_\eps(t)$ is of order $o(\sqrt{\eps})$ in $L^p$, for $1 \leq p < 2(1-H)$. This ensures that $\average{R_\eps(\cdot), W(\cdot)}_t \sim o(\sqrt{\eps})$ in $L^p$ by the following reasoning. Using $L^2$ representation theorem, for each $t \in [0,T]$, there exists an adapted process $\beta(\cdot)$, such that $\EE\int_0^t \beta^2(u) \ud u < \infty$ and $R_\eps(t)$ admits the representation
\begin{equation*}
R_\eps(t) = \EE[R_\eps(t)] + \int_0^t \beta(u) \ud W_u.
\end{equation*}
Since $\abs{\EE[R_\eps(t)]} \leq \left(\EE\abs{R_\eps(t)}^p\right)^{1/p} \sim o(\sqrt{\eps})$, we deduce
\begin{equation*}
\EE\abs{\int_0^t \beta(u)\ud W_u}^p = \EE\abs{R_\eps(t) - \EE[R_\eps(t)]}^p \leq C (\EE\abs{R_\eps(t)}^p + \EE^p\abs{R_\eps(t)}) \sim o(\sqrt{\eps}^p),
\end{equation*}
and 
\begin{equation*}
\EE\abs{\average{R_\eps, W}_t}^p \leq \EE\int_0^t \abs{\beta(u)}^p \ud u \leq C \EE\abs{\int_0^t \beta(u)\ud W_u}^p \leq o(\sqrt{\eps}^p).
\end{equation*}

Now applying It\^o's formula to the above expression of $M_t$, none of the three terms will contribute to the diffusion part at order $\sqrt{\eps}$, meaning that $\int_0^t M_u\xi_u \ud u \sim o(\sqrt{\eps})$ in $L^p$, for every $t \in [0,T]$. By model assumptions, $M_t$ is bounded, therefore  
\begin{equation*}
\int_0^t \xi_u \ud u \sim o(\sqrt{\eps}) \text{ in } L^p, \quad \forall t \in [0,T].
\end{equation*}

Denote by $\pnorm{\xi} := \left(\EE\int_0^T \abs{\xi_t}^p \ud t\right)^{1/p}$ the $L^p(\Omega\times[0,T])$ norm, together with the estimates of $\xi_t$ in Lemma~\ref{lem_xi}, that is, $\abs{\xi} \leq C\sqrt{\eps}$ for any $t \in [0,T]$, we have the desired approximation \eqref{def_pz}:
\begin{align*}
\left(\EE\int_0^T \abs{\pi^\ast_t - \pi_t^{(0)}}^p \ud t \right)^{1/p} &= \pnorm{\pi^\ast - \pi^{(0)}} \leq C \pnorm{\xi X}  = C \prnorm{\xi}\pqnorm{X} \\
& \leq C'\left(\pnorm{\xi}(\sqrt{\eps})^{r - 1}\right)^{1/r}, \text{ where } \frac{1}{r} + \frac{1}{q} = 1 \text{ and }r, q >1\\
& \sim o(\sqrt{\eps}).
\end{align*}
}

\end{proof}


In the next subsection, we show the asymptotic optimality property of $\pz$ defined in \eqref{def_pz}. That is, by only implementing $\pz$, the agent is able to obtain the first order approximation \eqref{def_Qeps} of the optimal value \eqref{def_Vy}.

\subsection{Asymptotic optimality of $\pi_t^{(0)}$}\label{sec_asymppzpower}
Let $X_t^\pz$ be the wealth process associated to $\pi_t^{(0)}=  \frac{\lambda(\Yh{t})}{\gamma\sigma(\Yh{t})}X_t$:
\begin{align}
\ud X_t^\pz &= \mu(\Yh{t})\pi_t^{(0)} \ud t + \sigma(\Yh{t})\pi_t^{(0)} \ud W_t \nonumber\\
& = \frac{\lambda^2(\Yh{t})}{\gamma}X_t^\pz \ud t + \frac{\lambda(\Yh{t})}{\gamma}X_t^\pz \ud W_t.\label{def_Xtpz}
\end{align}
By the boundedness of $\lambda(\cdot)$, $X_t^\pz$ has $p^{th}$-moment for any $p$, and this ensures the admissibility of $\pz$. 

To systematically simplify the notation in the derivation in Proposition~\ref{cor_optimalityofpz}, we introduce the risk-tolerance function $R(t,x)$ and the differential operator $D_k$, as in \cite{FoSiZa:13}, by:
\begin{equation}\label{def_risktolerance}
R(t,x) \equiv - \frac{\vz_x(t,x)}{\vz_{xx}(t,x)} = \frac{x}{\gamma}, \quad \text{and} \quad D_k \equiv R(t,x)^k \partial_x^k,\; \forall k \in \NN^+.
\end{equation}
Then, the wealth process $X_t^\pz$ can be written as
\begin{align}\label{def_Xt}
\ud X_t^\pz = \lambda^2(\Yh{t}) R(t, X_t^\pz) \ud t + \lambda(\Yh{t})R(t,X_t^\pz) \ud W_t.
\end{align}
We also introduce the nonlinear operator $\Ltx(\lambda)$ by
\begin{equation}
\Ltx(\lambda) \equiv \partial_t + \frac{1}{2}\lambda^2D_2 + \lambda^2D_1,\label{def_ltx}
\end{equation}
and one can check by straightforward calculation that $\vz$ satisfies
\begin{align}\label{eq_vz}
&\Ltx(\overline\lambda)\vz(t,x) = 0.
\end{align}

Denote by $\Vyl_\cdot$ the corresponding value process 
\begin{equation*}
\Vyl_t:= \EE\left[\left.U\left( X_T^\pz \right)\right\vert \MCF_t\right].
\end{equation*}
In what follows, we aim to find the approximation of $\Vyl_t$ in the regime of $\eps$ small. This will be obtained via ``epsilon-martingale decomposition'', firstly introduced in \cite{FoPaSi:00} to solve the linear pricing problem, and later developed in \cite{FoPaSi:01,GaSo:15,GaSo:16,GaSo:17,FoHu:17,FoHu2:17}. Roughly speaking, we need to construct an explicit function $Q_t$ for $\Vyl_t$ in the form of a martingale plus something small (non-martingale part), which has the same terminal condition as $\Vyl_t$. Then, this ansatz is indeed the approximation to $\Vyl_t$ up to order of the non-martingale part. Detailed explanation can be found in the above references.  

\begin{prop}\label{cor_optimalityofpz}
Under model assumptions, for fixed $t \in[0,T)$ and the  observed value $X_t$, $\Vyl_t$ is approximated by
	\begin{equation}\label{def_Vylpower}
	\Vyl_t = Q_t^\eps(X_t) + o(\sqrt{\eps}),
	\end{equation}
where $Q_t^\eps$ is given in \eqref{def_Qeps}.
\end{prop}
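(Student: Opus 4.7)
My plan is to apply the \emph{epsilon-martingale decomposition} introduced in \cite{FoPaSi:00} and used in the long-range regime in \cite{FoHu2:17}. I will construct an auxiliary process $\widetilde Q_t^\eps = Q_t^\eps(X_t^\pz) + C_t^\eps$ with $C_T^\eps = 0$ and $C_t^\eps = o(\sqrt{\eps})$ uniformly, such that by It\^o's formula $d\widetilde Q_t^\eps$ decomposes as a local martingale increment plus a drift whose $\MCF_t$-conditional integral over $[t, T]$ is $o(\sqrt{\eps})$ in the $L^p$ sense. Since $\widetilde Q_T^\eps = Q_T^\eps(X_T^\pz) = U(X_T^\pz)$ and $\Vyl_t = \EE[U(X_T^\pz)\vert \MCF_t]$, taking the conditional expectation then yields $\Vyl_t = \widetilde Q_t^\eps + o(\sqrt{\eps}) = Q_t^\eps(X_t^\pz) + o(\sqrt{\eps})$.

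The correction is engineered to absorb the oscillating drift $(\lambda^2(\Yh{t}) - \overline{\lambda}^2)$ coming out of the It\^o expansion of $Q_t^\eps(X_t^\pz)$. Setting
\begin{equation*}
\Phi(t, x) := (D_2 + 2D_1)\vz(t, x) = \frac{x^{1-\gamma}}{\gamma}e^{\frac{1-\gamma}{2\gamma}\overline{\lambda}^2(T-t)},
\end{equation*}
I define $C_t^\eps := \frac{1}{2}\phi_t^\eps\,\Phi(t, X_t^\pz)$ with $\phi_t^\eps$ as in \eqref{def_phi}. Since $\phi_t^\eps \sim \MCO(\eps^{1-H})$ by Lemma \ref{lem_moments} and $H < \half$ gives $\eps^{1-H} = o(\sqrt{\eps})$, this correction is of the required order and vanishes at $T$.

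Applying It\^o to $Q_t^\eps(X_t^\pz)$ via \eqref{def_Xt}, the drift equals $\Ltx(\lambda(\Yh{t}))Q_t^\eps$. The operator identity $\Ltx(\lambda) = \Ltx(\overline{\lambda}) + \frac{1}{2}(\lambda^2 - \overline{\lambda}^2)(D_2 + 2D_1)$ together with $\Ltx(\overline{\lambda})\vz = 0$ (from \eqref{eq_vz}) gives
\begin{equation*}
\Ltx(\lambda(\Yh{t}))Q_t^\eps = \sqrt{\eps}\rho\overline{D}\,\Ltx(\overline{\lambda})\vo + \tfrac{1}{2}(\lambda^2(\Yh{t}) - \overline{\lambda}^2)\Phi + \mathcal{E}_t^{(1)},
\end{equation*}
with $\mathcal{E}_t^{(1)} = \frac{1}{2}\sqrt{\eps}\rho\overline{D}(\lambda^2 - \overline{\lambda}^2)(D_2 + 2D_1)\vo$, whose time integral is $o(\sqrt{\eps})$ by Lemma \ref{lem_moments}. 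Applying It\^o to $C_t^\eps$ with $d\phi_t^\eps = 2\vartheta_t^\eps dW_t^Y - (\lambda^2(\Yh{t}) - \overline{\lambda}^2)dt$ (read off from \eqref{def_phi}-\eqref{def_psi}), the drift of $C_t^\eps$ produces
\begin{equation*}
-\tfrac{1}{2}(\lambda^2 - \overline{\lambda}^2)\Phi + \tfrac{1}{2}\phi_t^\eps\,\Ltx(\lambda(\Yh{t}))\Phi + \rho\,\vartheta_t^\eps\lambda(\Yh{t})\,R(t, X_t^\pz)\,\partial_x\Phi(t, X_t^\pz).
\end{equation*}
The first summand cancels the oscillating piece of $\Ltx(\lambda)Q_t^\eps$ exactly, and the second is $\MCO(\phi^\eps) = o(\sqrt{\eps})$ uniformly thanks to the polynomial bound on $\Ltx(\lambda)\Phi$ and the $L^p$-moments of $X^\pz$. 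A direct computation using the explicit forms of $\vz, \vo$ yields the pivotal identity
\begin{equation*}
-\Ltx(\overline{\lambda})\vo(t, x) = R(t, x)\,\partial_x\Phi(t, x) = \frac{1-\gamma}{\gamma^2}x^{1-\gamma}e^{\frac{1-\gamma}{2\gamma}\overline{\lambda}^2(T-t)},
\end{equation*}
so that the residual drift of $\widetilde Q_t^\eps$ reduces to $\rho\,R(t, X_t^\pz)\,\partial_x\Phi(t, X_t^\pz)\bigl[\vartheta_t^\eps\lambda(\Yh{t}) - \sqrt{\eps}\,\overline{D}\bigr]$ plus $o(\sqrt{\eps})$.

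The main obstacle is to verify that the $\MCF_t$-conditional integral of this residual is indeed $o(\sqrt{\eps})$ in $L^p$, that is, to prove
\begin{equation*}
\EE\!\left[\int_t^T \vartheta_u^\eps\lambda(\Yh{u})\,f(u, X_u^\pz)\,du\, \bigg\vert\, \MCF_t\right] = \sqrt{\eps}\,\overline{D}\,\EE\!\left[\int_t^T f(u, X_u^\pz)\,du\, \bigg\vert\, \MCF_t\right] + o(\sqrt{\eps})
\end{equation*}
for $f = R\,\partial_x\Phi$. This is the weighted analog of the unweighted asymptotic $\widetilde\EE[\int_0^T \vartheta_u^\eps\lambda(\Yh{u})du\vert \MCG_0] = \sqrt{\eps}\,\overline{D}\,T + o(\sqrt{\eps})$ established in the proof of Theorem \ref{thm_Vtpowerexpansion}. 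I would adapt the same chain of substitutions used there (switch to $\Yht{s}$ by Taylor expanding $G'$, apply Fubini, identify $\overline{D}$ via \eqref{def_Dbar}), with the additional weight $f(u, X_u^\pz)$ handled by a Cauchy--Schwarz/H\"older estimate combined with the $L^p$-moments of $X^\pz$ that follow from the boundedness of $\lambda(\cdot)$; all other error terms are controlled by Lemmas \ref{lem_moments}--\ref{lem_psihat} exactly as in Theorem \ref{thm_Vtpowerexpansion}. Once this residual estimate is in hand, $\EE[\widetilde Q_T^\eps \vert \MCF_t] = \widetilde Q_t^\eps + o(\sqrt{\eps})$ and $C_t^\eps = o(\sqrt{\eps})$ yield the claim $\Vyl_t = Q_t^\eps(X_t) + o(\sqrt{\eps})$.
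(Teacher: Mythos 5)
Your overall construction coincides with the paper's: the epsilon-martingale decomposition with the ansatz $\widetilde Q_t^\eps=\vz+\phi_t^\eps D_1\vz+\sqrt{\eps}\,\rho\overline D\,\vo$ (your $C_t^\eps=\tfrac12\phi_t^\eps(D_2+2D_1)\vz$ is exactly the paper's corrector $\phi_t^\eps D_1\vz$ once the factor $\tfrac12$ between the two conventions for $\phi^\eps$ is reconciled), the same cancellation of the oscillating drift $\tfrac12(\lambda^2(\Yh{t})-\overline\lambda^2)D_1\vz$, the same identity $\Ltx(\overline\lambda)\vo=-D_1^2\vz$, and the same reduction to the three residual drift terms, which are precisely the paper's $R^{(1)}_{t,T}$, $R^{(2)}_{t,T}$, $R^{(3)}_{t,T}$.

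The gap is in the step you yourself flag as the main obstacle, namely showing that $\int_t^T\rho\bigl(\vartheta_u^\eps\lambda(\Yh{u})-\sqrt{\eps}\,\overline D\bigr)D_1^2\vz(u,X_u^\pz)\ud u$ is $o(\sqrt{\eps})$. Pointwise in $u$, $\vartheta_u^\eps\lambda(\Yh{u})-\sqrt{\eps}\,\overline D$ is only $\MCO(\sqrt{\eps})$ in $L^2$; the $o(\sqrt{\eps})$ smallness is available only after time integration, through $\sup_{t}\ltwonorm{\kappa_t^\eps}=o(\sqrt{\eps})$ with $\kappa_t^\eps=\int_0^t(\vartheta_s^\eps\lambda(\Yh{s})-\sqrt{\eps}\,\overline D)\ud s$ (Lemma~\ref{lem_moments}(i)). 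Hence a Cauchy--Schwarz/H\"older estimate ``with the weight $f(u,X_u^\pz)$'', as you propose, only yields $\MCO(\sqrt{\eps})$, not $o(\sqrt{\eps})$: one must decouple the slowly varying weight from the rapidly fluctuating factor so that the time-averaged bound on $\kappa^\eps$ can be exploited. The paper does this in Lemma~\ref{lem_Rjgeneral} ($j=2$) by discretizing $[t,T]$ into $N$ subintervals, freezing $Z^{(2)}_s=D_1^2\vz(s,X_s^\pz)$ at the left endpoints (the frozen sum is then controlled by $\sup_s\ltwonorm{\kappa_s^\eps}$), bounding the freezing error via the $L^2$-Lipschitz increments of $Z^{(2)}$ and the crude $\MCO(\sqrt{\eps})$ bound on $\ud\kappa_s^\eps/\ud s$, and letting $N\to\infty$; the same device is needed for your $\mathcal{E}^{(1)}$ (the paper's $R^{(3)}$), which Lemma~\ref{lem_moments} alone does not settle since that integrand already carries the factor $\sqrt{\eps}$ and only averaging upgrades $\MCO(\sqrt{\eps})$ to $o(\sqrt{\eps})$. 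Redoing the chain of substitutions of Theorem~\ref{thm_Vtpowerexpansion} with the weight inside the conditional expectation also does not go through as stated, because $f(u,X_u^\pz)$ is $\MCF_u$-adapted and correlated with $\Yh{}$, so the constant $\overline D$ no longer factors out cleanly. A minor omission: passing from the decomposition to $\Vyl_t$ requires the local martingale parts to be true martingales (the paper's Lemma~\ref{lem_Mj}, via Burkholder--Davis--Gundy and the moments of $X^\pz$); with bounded $\lambda$ this is routine, but it should be stated.
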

The above Proposition combined with Theorem~\ref{thm_Vtpowerexpansion} immediately gives:

\centerline{\it
	$\pi_t^{(0)}$ is asymptotically optimal within all admissible strategy $\MCA_t^\eps$ up to order $\sqrt\eps$.}

\noindent This is because $\Vyl_t- \Vy_t$ is of order $o(\sqrt{\eps})$, which indicates that  $\pi_t^{(0)}$
already generates the leading order term plus the correction of order $\sqrt\eps$ given by \eqref{def_Qeps}.

\begin{proof}[Proof of Proposition~\ref{cor_optimalityofpz}]
	
		Based on the epsilon-martingale decomposition approach, it is enough to find a decomposition $M_t^\eps + R_t^\eps$ for 
		$Q_t^{\eps}$, with $M_t^\eps$ being a true martingale, and $R_t^\eps$ being of order $o(\sqrt\eps)$. We present how $M_t^\eps$ and $R_t^\eps$ are determined, with actual  proofs delayed to Appendix~\ref{app_lemmas}. Note that terms of order $\sqrt{\eps}$ are included in $M_t^\eps$ so that $R_t^\eps$ is pushed to a higher order.
	
		Applying It\^{o} formula to $\vz(t,X_t^\pz)$ brings 
		\begin{align}
		\ud \vz(t, X_t^\pz) &= \Ltx(\lambda(\Yh{t}))\vz(t,X_t^\pz) \ud t + \sigma(\Yh{t})\pz(t,X_t^\pz, \Yh{t})\vz_x(t,X_t^\pz) \ud W_t \nonumber\\
		& = \half \left(\lambda^2(\Yh{t}) - \overline{\lambda}^2\right) D_1\vz(t,X_t^\pz) \ud t + \ud M_t^{(1)},\label{eq_s1}
		\end{align}
		where $M_t^{(1)}$ is the martingale given by
		\begin{equation}\label{def_M1}
		\ud M_t^{(1)} = \sigma(\Yh{t})\pz(t,X_t^\pz, \Yh{t})\vz_x(t,X_t^\pz) \ud W_t,
		\end{equation}
		and the relations \eqref{eq_vz} and  $D_1\vz(t,x) = -D_2\vz(t,x)$ have been used.
		
		Recall $\phi_t^\eps$ and $\psi_t^\eps$ defined in \eqref{def_phi} and \eqref{def_psi} respectively, then, we have
		$\ud \psi_t^\eps - \ud \phi_t^\eps = \half\left(\lambda^2(\Yh{t}) - \overline{\lambda}^2\right)\ud t$, and 
		the first term in \eqref{eq_s1} becomes
		\begin{equation*}
		\half \left(\lambda^2(\Yh{t}) - \overline{\lambda}^2\right) D_1\vz(t,X_t^\pz) \ud t = D_1\vz(t,X_t^\pz) \left(\ud \psi_t^\eps - \ud \phi_t^\eps\right).
		\end{equation*}
		To further simplify $D_1\vz(t,X_t^\pz)\ud \phi_t^\eps$, which corresponds to finding the corrector to $\vz(t,X_t^\pz)$ at order $\sqrt\eps$, we compute the total differential of $D_1\vz(t,X_t^\pz)\phi_t^\eps$ (the arguments of $\vz(t,X_t^\pz)$ will be omitted systematically in the following):
		\begin{align*}
		\ud \left(D_1\vz\phi_t^\eps\right) &= D_1\vz\ud \phi_t^\eps + \phi_t^\eps\Ltx(\lambda(\Yh{t})) D_1\vz \ud t + \phi_t^\eps \sigma(\Yh{t})\pz(t,X_t^\pz,\Yh{t})\partial_x D_1\vz \ud W_t \\
		&\hspace{10pt}+ \sigma(\Yh{t})\pz(t,X_t^\pz, \Yh{t}) \partial_x D_1\vz \ud \average{W,\phi^\eps}_t \\
		&= D_1\vz\ud \phi_t^\eps + \phi_t^\eps\left[\half(\lambda^2(\Yh{t})-\overline\lambda^2)(D_2 + 2D_1)D_1\vz\right]\ud t   \\
		&\hspace{10pt}+ \phi_t^\eps \lambda(\Yh{t})D_1^2\vz \ud W_t + \rho\lambda(\Yh{t})D_1^2\vz \ud \average{W^Y,\psi^\eps}_t
		\end{align*}
		In the derivation, we have used the definition of $D_1$ and $R(t,x)$ (cf. \eqref{def_risktolerance}), and
		\begin{equation*}
		\Ltx(\overline{\lambda})D_1\vz = D_1\Ltx(\overline{\lambda})\vz = 0, \text{ and } \ud \average{W, \phi^\eps}_t = \rho\ud \average{W^Y, \psi^\eps}_t.
		\end{equation*}
		The results in Lemma~\ref{lem_moments}\eqref{lem_psi}: $\ud \average{W^Y, \psi^\eps}_t = \vartheta_t^\eps\ud t$, together with the above derivation produce
		\begin{align}
		\ud \left(D_1\vz\phi_t^\eps\right) &= -\half \left(\lambda^2(\Yh{t}) - \overline{\lambda}^2\right) D_1\vz \ud t +  \phi_t^\eps\left[\half(\lambda^2(\Yh{t})-\overline\lambda^2)(D_2+2D_1)D_1\vz\right]\ud t \nonumber\\
		& \hspace{10pt}+ \rho\lambda(\Yh{t})D_1^2\vz \vartheta_t^\eps \ud t + \ud M_t^{(2)},\label{eq_s2}
		\end{align}
		and 
		\begin{align}\label{def_M2}
		\ud M_t^{(2)} = D_1\vz(t, X_t^\pz) \ud \psi_t^\eps + \phi_t^\eps \lambda(\Yh{t})D_1^2\vz(t, X_t^\pz) \ud W_t.
		\end{align}
		From Lemma~\ref{lem_moments}\eqref{lem_phi}, we see that the second term in \eqref{eq_s2} is of order $\eps^{1-H}$, and thus it remains to analyze the third term $\rho\lambda(\Yh{t})D_1^2 \vz \vartheta_t^\eps \ud t$.  To this end, we rewrite $\vo$ in terms of $D_1$ and $\vz$ by $\vo = D_1^2\vz(T-t)$, and observe that it satisfies $\Ltx(\overline\lambda)\vo = -D_1^2 \vz$. Apply It\^{o}'s formula again to $\vo$ gives,
			\begin{align}
			\ud \vo(t,X_t^\pz) &= \Ltx(\lambda(\Yh{t})) \vo(t, X_t^\pz) \ud t + \sigma(\Yh{t})\pz(t,X_t^\pz, \Yh{t})\vo_x(t,X_t^\pz) \ud W_t\nonumber\\
			& = \half(\lambda^2(\Yh{t}) - \overline{\lambda}^2)(D_2 + 2D_1)\vo(t, X_t^\pz) \ud t - D_1^2\vz(t,X_t^\pz)\ud t + \ud M_t^{(3)},\label{eq_s3}
			\end{align}
			where $M_t^{(3)}$ is the martingale defined by
			\begin{equation}\label{def_M3}
			\ud M_t^{(3)} = \sigma(\Yh{t})\pz(t,X_t^\pz, \Yh{t})\vo_x(t,X_t^\pz) \ud W_t.
			\end{equation}

%
	
		Defining the quantity $\widetilde Q_t^\eps$ by
		\begin{equation}\label{def_Qtilde}
		\widetilde Q_t^\eps(x) = \vz(t,x) + D_1\vz(t,x)\phi_t^\eps + \sqrt{\eps} \rho \overline D \vo(t,x),
		\end{equation}
		and combining equation \eqref{eq_s1}, \eqref{eq_s2} and \eqref{eq_s3} yields
		\begin{align*}
		\ud \widetilde Q_t^{\eps}(X_t^\pz) &= \ud \left(\vz(t, X_t^\pz) + D_1\vz(t, X_t^\pz) \phi_t^\eps +  \sqrt\eps\rho\overline D\vo(t, X_t^\pz)\right)\\
		& = \phi_t^\eps\left[\half(\lambda^2(\Yh{t})-\overline\lambda^2)(D_2+2D_1)D_1\vz\right]\ud t  +  \rho\left(\lambda(\Yh{t})\vartheta_t^\eps-\sqrt{\eps}\,\overline D\right)D_1^2\vz \ud t \\
		& \hspace{10pt} +  \half\sqrt\eps\rho\overline D (\lambda^2(\Yh{t}) - \overline{\lambda}^2)(D_2 + 2D_1)\vo(t, X_t^\pz) \ud t  \\
		& \hspace{10pt}+ \ud M_t^{(1)} + \ud M_t^{(2)} + \sqrt\eps\rho\overline D\ud M_t^{(3)}.
		\end{align*}
		Denote by $R_{t,T}^{(j)}$, $j=1, 2, 3$ the first three terms in the above expression
		\begin{align}
		&R_{t,T}^{(1)} := \int_t^T  \phi_s^\eps\left[\half(\lambda^2(\Yh{s})-\overline\lambda^2)(D_2+2D_1)D_1\vz(s,X_s^\pz)\right]\ud s, \label{def_R1general}\\
		&R_{t,T}^{(2)} := \int_t^T  \rho\left(\lambda(\Yh{s})\vartheta_s^\eps-\sqrt{\eps}\,\overline D \right)D_1^2\vz(s, X_s^\pz)\ud s,\label{def_R2general}\\
		&R_{t,T}^{(3)} := \int_t^T  \half\sqrt\eps\rho\overline D (\lambda^2(\Yh{s}) - \overline{\lambda}^2)(D_2 + 2D_1)\vo(s, X_s^\pz) \ud s.\label{def_R3general}
		\end{align}
		It is proved in Lemma~\ref{lem_Rjgeneral} that they are $o(\sqrt\eps)$ terms in $L^1$:
		\begin{equation}\label{eq_Rjgeneral}
		\lim_{\eps\to 0}\eps^{-1/2}\; \EE\abs{R_{t,T}^{(j)}} = 0, \quad \forall j = 1, 2, 3.
		\end{equation}
		Lemma~\ref{lem_Mj} also shows that $M_t^{(j)}$, $j=1, 2, 3$ are indeed true $\PP$-martingales.
		
		Therefore, define the martingale $M_t^\eps$ and the non-martingale part $R_t^\eps$ respectively by
		\begin{align*}
		&M_t^\eps := \int_0^t  \ud M_s^{(1)} + \ud M_s^{(2)} + \sqrt\eps\rho\overline D\ud M_s^{(3)},\\
		&R_{T}^\eps - R_t^{\eps} := R_{t,T}^{(1)} + R_{t,T}^{(2)} +R_{t,T}^{(3)},
		\end{align*}
		and observe that $\widetilde Q_T^{\eps}(x) = \vz(T,x) = U(x)$ (since $\phi_T^\eps = \vo(T,x) = 0$ by definition), and $D_1\vz(t,X_t^\pz)\phi_t^\eps$ is of order $o(\sqrt{\eps})$ in $L^1$ (by Lemma~\ref{lem_moments}\eqref{lem_phi} and integrability of $D_1\vz$), we obtain the desired result
		\begin{align*}
		\Vyl_t &= \EE\left[\widetilde Q_T^{\eps}(X_T^\pz)\big\vert \MCF_t\right] = \widetilde Q_t^{\eps}(X_t^\pz) + \EE[M_T^\eps - M_t^\eps\vert \MCF_t] + \EE[R_T^\eps - R_t^\eps \vert\MCF_t]\\
		& = Q_t^{\eps}(X_t^\pz) + D_1\vz(t,X_t^\pz)\phi_t^\eps +  \EE[R_{t,T}^{(1)} + R_{t,T}^{(2)} +R_{t,T}^{(3)} +R_{t,T}^{(4)} \vert\MCF_t] \\
		& = Q_t^{ \eps}(X_t^\pz) + o(\sqrt\eps).
		\end{align*}
		
	\end{proof}

\begin{rem}
Expansion results of $\Vyl$ can be extended to the case with general utility functions, as in \cite[Section~4]{FoHu:17,FoHu2:17}. This is accomplished using the properties of the risk-tolerance function $R(t,x)$ studied in \cite{FoHu:16}.
\end{rem}

\subsection{Numerical Implementation}
Numerical implementing $\pz$ needs to track $\Yh{t}$. By the high oscillation nature of $\Yh{t}$, this usually requires high frequency data and to deal with micro-structure issues. This is not practical for long-period investment and agents usually prefer not to tackle this issue. Instead, they would look into strategies which do not depend on the factor $\Yh{t}$. As shown in \cite{FoHu2:17}, such a strategy takes the form of classical Merton optimal strategies under Black-Scholes setting with drift $\overline \lambda$ and volatility $\overline \sigma: = \sqrt{\average{\sigma^2}}$:
\begin{equation}\label{def_pzbar}
 \bar\pi_t^{(0)} = \frac{\overline{\mu}}{\gamma\overline{\sigma}^2} X_t.
 \end{equation}
 
 To measure the utility loss of using $\bar{\pi}_t^{(0)}$, we define the associated problem value $V_t^{\bar\pi^{(0)},\eps}$
 \begin{equation}\label{def_Vtpibar}
  V_t^{\bar\pi^{(0)}, \eps} = \EE[U(X_T^{\bar\pi^{(0)}})\vert \MCF_t].
 \end{equation}
 Using the ergodic property of $\Yh{t}$:
 \begin{equation*}
 \int_t^T \left(\mu(\Yh{s}) - \overline{\mu}\right) \ud s \sim o(1), \quad \text{and} \quad \int_t^T \left(\sigma^2(\Yh{s})-\overline{\sigma}^2\right) \ud s \sim o(1),
 \end{equation*}
one can deduces the optimal leading order term
 \begin{equation*}
 \frac{X_t^{1-\gamma}}{1-\gamma}e^{\frac{1-\gamma}{2\gamma}\frac{\overline{\mu}^2}{\overline{\sigma}^2}(T-t)}.
 \end{equation*}
 This can be interpreted as the optimal value with Sharpe ratio $\overline{\mu}/\overline{\sigma}$. Then, the principal term  of utility loss of using $\bar\pi_t^{(0)}$ is quantified by comparing the above term with the leading order term of $\Vy_t$ given in \eqref{eq_Vtpower}-\eqref{def_Qeps}:
 \begin{equation*}
 \frac{X_t^{1-\gamma}}{1-\gamma}e^{\frac{1-\gamma}{2\gamma}\overline{\lambda}^2(T-t)},
 \end{equation*}
 and is measured by the Cauchy-Schwarz gap
 \begin{equation*}
 \overline{\lambda}^2 = \average{\frac{\mu^2}{\sigma^2}} \geq \frac{\average{\mu^2}}{\average{\sigma^2}}  \geq \frac{\overline{\mu}^2}{\overline{\sigma}^2},
 \end{equation*}
 as in the Markovian setup in \cite{FoSiZa:13}, and in the long-memory fractional setup in \cite{FoHu2:17}.

Next, we illustrate numerically the asymptotic optimality property of $\pi_t^{(0)}$ and the sub-optimality of $\bar{\pi}_t^{(0)}$. To this end,  we numerically evaluate $\Vy_t$, $\Vyl_t$,  and $V_t^{\bar\pi^{(0)}, \eps }$  at time $t=0$, and compare their differences. Applying a change of measure to equation \eqref{def_Vy}, we have
\begin{align*}
\Vy_0 =  \frac{X_0^{1-\gamma}}{1-\gamma}\left[\EE\left(e^{\left(\frac{1-\gamma}{2\gamma}\right)\int_0^T \lambda^2(\Yh{s})\ud s + \rho\left(\frac{1-\gamma}{\gamma}\right)\int_0^T \lambda(\Yh{s})\ud W_s^Y} \Big\vert \MCG_0\right)\right]^q.
\end{align*} 
Solving the SDE for $X_t^\pz$ and plugging the solution into the definition of $\Vyl_t$ bring
\begin{align*}
\Vyl_0 = \frac{X_0^{1-\gamma}}{1-\gamma}\EE\left(e^{\left(\frac{-2\gamma^2 + 3\gamma - 1}{2\gamma^2}\right)\int_0^T \lambda^2(\Yh{s})\ud s + \left(\frac{1-\gamma}{\gamma}\right)\int_0^T \lambda(\Yh{s})\ud W_s} \Big\vert \MCF_0\right).
\end{align*}
Similarly, the value process defined in \eqref{def_Vtpibar} following the $Y$-independent strategy $\bar\pi_t^{(0)}$ is given by
\begin{equation*}
V_0^{\bar\pi^{(0)}, \eps} = \frac{X_0^{1-\gamma}}{1-\gamma} \EE\left(e^{\left(\frac{1-\gamma}{\gamma}\right)\frac{\overline{\mu}}{\overline{\sigma}^2}\int_0^T \mu(\Yh{s})\ud s - \left(\frac{1-\gamma}{2\gamma^2}\right)\frac{\overline{\mu}^2}{\overline{\sigma}^4}\int_0^T \sigma^2(\Yh{s})\ud s  + \left(\frac{1-\gamma}{\gamma}\right)\frac{\overline{\mu}}{\overline{\sigma}^2}\int_0^T \sigma(\Yh{s})\ud W_s} \Big\vert \MCF_0\right).
\end{equation*}

A full study of the quality of the approximation to optimal strategy involves precise simulations of the fractional OU process $\Yh{t}$ and Brownian motion $W_t$ jointly, and mesh-size determination in $t$ so that the numerical error is negligible comparing to the loss in value function given by the approximated strategy. This is beyond the scope of this paper, and we therefore compute the values for only a few ``omegas'' for a purpose of illustration. 

The model parameters are chosen as:
\begin{equation*}
T = 1, \quad H = 0.1, \quad a = 1, \quad  \gamma = 0.4, \quad \rho = -0.5, \quad \mu(y) = \frac{0.1 \times \lambda(y)}{0.1 + \lambda(y)}, \quad  \lambda^2(y) = \half \int_{-\infty}^{y/\sigma_{ou}}p(z/2)\ud z,
\end{equation*}
where $p(z)$ denotes the standard normal density. Our choice $\lambda$ satisfies the standing assumptions in this paper. Also, both $\mu(y)$ and $\sigma^2(y)=\mu^2(y)/\lambda^2(y)$ are integrable with respect to the stationary distribution of $\Yh{}$, and $\overline\mu = 0.087$ and $\overline\sigma^2 = 0.0176$.

Notice that $\MCF_0$ and $\MCG_0$ are not trivial $\sigma$-algebra. We first generate a ``historical'' path $W_t^Y$ between $-M$ and $0$, and then evaluate each conditional expectation by the average of 500,000 paths. By the short-range dependence of $\Yh{t}$, we $M = (T/\Delta t)^{0.5}$ (cf. \cite{BaLaOpPhTa:03}), and by its roughness, we choose a fine mesh-size $\Delta t = 10^{-4}$. Then, the fast-varying factor $(\Yh{t})_{t \in [0,T]}$ \eqref{eq_Yh} is generated using the Euler scheme. We mention again that, the numerical results in Table \ref{table2} are only for illustration purpose for the reason aforementioned. 

\begin{table}[H]
	\centering
	\caption{The value processes $\Vy_0$ \emph{vs.} $\Vyl_0$ \emph{vs.} $V_0^{\bar\pi^{(0)}, \eps}$ for the power utility case.}\label{table2}
	\begin{tabular}{|c|c|c|c|c|}\hline
		&& \#1 & \#2 & \#3   \\ \hline\hline 
		&$\Vy_0$ & 1.4645 & 1.4296 & 1.4075\\
		$\eps = 1$ & $\Vy_0 - \Vyl_0$& 0.0021 & 0.0021 & 0.0021\\
		& $\Vy_0 - V_0^{\bar\pi^{(0)},\eps}$& 0.0538 & 0.0495 & 0.0509\\ \hline 
		
		&$\Vy_0$& 1.4530 & 1.4328 & 1.4191 \\
		$\eps = 0.5$ & $\Vy_0 -\Vyl_0$& 0.0017 & 0.0018 & 0.0017\\
		& $\Vy_0 - V_0^{\bar\pi^{(0)},\eps}$& 0.0524 & 0.0475 & 0.0473\\ \hline 
				 
		&$\Vy_0$ & 1.4442 & 1.4464 & 1.4465\\
		$\eps = 0.1$ & $\Vy_0 - \Vyl_0$ & 0.0010 & 0.0010 & 0.0010\\
		& $\Vy_0 - V_0^{\bar\pi^{(0)},\eps}$ & 0.0423 & 0.0405 & 0.0400\\ \hline 
		
		&$\Vy_0$ & 1.4456& 1.4503 & 1.4522\\
		$\eps = 0.05$ & $\Vy_0 -\Vyl_0$ & 0.0006 & 0.0007 & 0.0007\\
		& $\Vy_0 - V_0^{\bar\pi^{(0)},\eps}$ & 0.0369 & 0.0366 & 0.0365\\ \hline 
		
		&$\Vy_0$ & 1.4507 & 1.4541 & 1.4563\\
		$\eps = 0.01$ & $\Vy_0 - \Vyl_0$ & 0.0002 & 0.0002 & 0.0003\\
		& $\Vy_0 - V_0^{\bar\pi^{(0)},\eps}$ & 0.0224  & 0.0249 & 0.0254\\ \hline 		 
		
	\end{tabular}
\end{table}

As expected, utility losses for both approximated strategies tend to decrease as $\eps$ goes to zero. The zeroth order strategy  $\pz$ performs well, even for not so small values of $\eps$. The relative utility loss is below $0.2\%$. Again, as expected, the ``lazy'' strategy $\bar\pi_t^{(0)}$ produces a larger utility loss, and thus underperforms $\pz$, but we still consider its performance good after observing that the $(\Vy_0 - V_0^{\bar\pi^{(0)},\eps})/\Vy_0$ is below $4\%$.

\subsection{Comparison with the Markovian case}\label{sec_comparisonMarkov}

In the Markovian case, corresponding to $H = \half$ in the modeling of $\Yh{t}$ \eqref{eq_Yh}, approximations to the value function and the optimal portfolio  have been rigorously derived in  \cite{FoSiZa:13}, and are of the form:
\begin{align}
\Vy(t, X_t) &= \frac{X_t^{1-\gamma}}{1-\gamma}e^{\frac{1-\gamma}{2\gamma}\overline{\lambda}^2(T-t)}\left[1 - \sqrt{\eps}\rho\left(\frac{1-\gamma}{\gamma}\right)^2\frac{\average{\lambda\theta'}}{2}(T-t)\right] + \MCO(\eps),\label{def_Vymarkovian}\\
\pi^\ast(t,X_t, \Yh{t}) &= \left[\frac{\lambda(\Yh{t})}{\gamma\sigma(\Yh{t})} + \sqrt{\eps} \frac{\rho(1-\gamma)}{\gamma^2\sigma(\Yh{t})}\frac{\theta'(\Yh{t})}{2}\right]X_t + \MCO(\eps), \label{def_pimarkovian}
\end{align}
where $\theta(y)$ solves the Poisson equation $\half \theta''(y) - ay\theta'(y) = \lambda^2(y) - \overline{\lambda}^2$.	These can be viewed as the limits $\lim_{\eps \to 0}\lim_{H \uparrow \half}$ of our current setup.

However, the two limits apparently do not commute for the optimal control, since there is no correction term at order $\sqrt{\eps}$ in \eqref{def_pioptimalunderfOU}. This is also the case for the problem value $\Vy$, even the first order correction in \eqref{def_Qeps} turns out to be of order $\sqrt{\eps}$. Formally, letting $H \uparrow \half$ in equation \eqref{def_Qeps}, we obtain
\begin{equation*}
\Vy(t,X_t) = \frac{X_t^{1-\gamma}}{1-\gamma}e^{\frac{1-\gamma}{2\gamma}\overline{\lambda}^2(T-t)}\left[1 + \sqrt{\eps}\rho\left(\frac{1-\gamma}{\gamma}\right)^2\overline D'(T-t)\right] + o(\sqrt\eps) 
\end{equation*}
where $\overline D'$ is the limit of $\overline D$ as $H$ approaches $\half$,
\begin{equation*}
\overline D' = \lim_{H \uparrow \half} \overline D = \int_0^\infty\left[\iint_{\RR^2} \lambda(\sigma_{ou}z)(\lambda\lambda')(\sigma_{ou}z') \tilde p_{C_Y(s)} (z,z') \right]e^{-as} \ud s 
\end{equation*}
and $\tilde p_{C_Y(s)} (z,z')$ is the bivariate normal density with mean zero and variance matrix $\begin{bmatrix}
1 &e^{-as}\\
e^{-as} & 1
\end{bmatrix}$. This is not the same constant $\average{\lambda\theta'}$ as in equation \eqref{def_Vymarkovian}.

Although the expansions of the two cases  ($H = \half$ \emph{vs.} $H \in (0, \half)$) surprisingly share the same form (a leading order term plus the first order correction at order $\sqrt{\eps}$), the coefficients are not identical. This is because our derivations in Theorem~\ref{thm_Vtpowerexpansion} and \ref{thm_piexpansion} are only valid for $H \in (0, \half)$, and the singular perturbation is ``singular'' at $H = \half$. Consequently, the order of limits $H \uparrow \half$ and $\eps \to 0$ is not interchangeable, and this leads to different expansion results. We also remark that, unlike in the case $H > \half$ where the Hurst index $H$ influences the order of first correction ($\eps^{1-H}$), here it is $\sqrt{\eps}$ no matter what value $H$ takes in $(0, \half)$.

\section{Conclusion}\label{sec_conclusion}

In this paper, we treated the portfolio optimization problem in a one-factor stochastic environment when the investor's utility is of power type. To accommodate recent empirical studies, we model this factor using a fast mean-reverting process driven by a fractional Brownian motion with Hurst index $H < \half$. Thus, its paths are rougher than the standard Bm and it has short-range dependence. Under this  setup, the value process can be represented explicitly thanks to the martingale distortion transformation (MDT), which enables us to perform an asymptotic expansion and obtain an approximation of the form: leading order term plus a correction term at order $\sqrt{\eps}$. Surprisingly, the order of the correction is not associated to the Hurst index $H$, and the fast factor $\Yh{t}$ appears in neither terms, which is a different behavior than in the cases studied in our previous work \cite{FoHu:17,FoHu2:17}. The approximation of the optimal strategy is also analyzed, and it turns out that there is no correction at the order $\sqrt{\eps}$. Nevertheless, we are still able to show that the leading order strategy, derived in Section~\ref{sec_asymppi}, is able to reproduce the problem value up to order $\sqrt{\eps}$, and therefore, it is asymptotically optimal within all admissible strategies. We remark that this is only proved in the case of power utility, as in general, the MDT is not available either with general utility or with multi-factor models, as well as the expansion of the full problem value. However, one can work within a smaller class of admissible strategies and easily extend the ``epsilon-martingale decomposition'' argument in Section~\ref{sec_asymppzpower} to obtain a weaker optimality of $\pz$. For the general utility case, this argument is very similar to our previous work \cite[Section 4]{FoHu:17,FoHu2:17}, and we did not include it here. The multi-factor case involves more techniques and will be presented in another paper in preparation \cite{Hu:18}.

\appendix
\section{Technical Lemmas}\label{app_lemmas}
 
In this section, we present several lemmas used in Section~\ref{sec_asymppower}. Note that the constants $K, K'$ in all lemmas do not depend on $\eps$ and may vary from line to line, and we denote the function $G(y)$ as
\begin{equation*}
G(y) = \half(\lambda^2(y) -\overline{\lambda}^2),
\end{equation*}
and $\lVert X\rVert_p := (\EE X^p)^{1/p}$ as the $L^p$-norm of $X$.

\begin{lem}\label{lem_moments} \quad
	\begin{enumerate}[(i)]
		\item\label{lem_psi}
		The martingale $\psi_t^\eps$ defined in \eqref{def_psi}:
		\begin{equation*}
		\psi_t^\eps = \EE\left[\int_0^T G(\Yh{s}) \ud s \Big\vert \MCG_t\right], 
		\end{equation*}
		satisfies
		\begin{equation*}
		\ud \psi_t^\eps = \vartheta_t^\eps \ud W_t^Y, \quad \vartheta_t^\eps := \int_t^T \EE\left[G'(\Yh{s})\vert\MCG_t\right]\kereps(s-t)\ud s. 
		\end{equation*}
		Moreover, for all $t \in [0,T]$ we have
		\begin{equation*}
		\EE[\lambda(\Yh{t})\vartheta_t^\eps] = \sqrt\eps \,\overline D + \widetilde D_t^\eps,
		\end{equation*}
		where $\overline D$ is the deterministic constant and $\widetilde D_t^\eps$ is of higher order than $\sqrt{\eps}$:
		\begin{equation*}
		\sup_{\eps\in[0,1]}\sup_{t \in[0,T]} \eps^{-1/2}\abs{\widetilde D_t^\eps} < \infty, \quad \text{and} \quad \forall t\in[0,T), \lim_{\eps\to 0} \eps^{-1/2}\abs{\widetilde D_t^\eps} = 0.
		\end{equation*}
		Define the random process $\kappa_t^\eps$ by
		\begin{equation}\label{def_kappa}
		\kappa_t^\eps = \int_0^t (\vartheta_s^\eps \lambda(\Yh{s}) - \sqrt{\eps}\,\overline D) \ud s.
		\end{equation}
		It is of higher order than $\sqrt{\eps}$ in $L^2$ sense uniformly in $t \in [0,T]$:
		\begin{equation}
		\lim_{\eps \to 0} \eps^{-1/2} \sup_{t\in[0,T]} \ltwonorm{\kappa_t^\eps} = 0.
		\end{equation}
		
		\item \label{lem_phi}
		The random component $\phi_t^\eps$ defined in \eqref{def_phi} has the form
		\begin{equation*}
		\phi_t^\eps = \EE\left[\int_t^T G(\Yh{s}) \ud s \Big\vert \MCG_t\right].
		\end{equation*}	
		It is a random variable with mean zero and standard deviation of order $\eps^{1-H}$ 	uniformly in $t \in [0,T]$:
		\begin{equation*}
		\sup_{\eps\in [0,1]}\sup_{t\in[0,T]} \eps^{1-H}\ltwonorm{\phi_t^\eps} < \infty.
		\end{equation*}
	
		\item\label{lem_i} The random process $I_t^\eps$ defined in \eqref{def_i}
		\begin{equation*}
		I_t^\eps = \int_0^t \left( \lambda^2(\Yh{s}) - \overline{\lambda}^2\right) \ud s,
		\end{equation*}
		satisfies 
		\begin{equation*}
		\lim_{\eps \to 0 }\sup_{t\in[0,T]} \EE[(I_t^\eps)^2] \leq K \eps^{1-H}.
		\end{equation*}

	\end{enumerate}
\end{lem}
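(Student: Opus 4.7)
The overall strategy is to exploit three tools in tandem: the Gaussian Volterra representation $\Yh{s} = \int_{-\infty}^{s}\MCK^\eps(s-u)\ud W_u^Y$; the Clark--Ocone formula to extract martingale coefficients; and the Hermite expansion of $G$ with respect to the invariant law $\mc{N}(0,\sigma_{ou}^2)$. Throughout, $G$, $G'$, $G''$ are bounded from the standing boundedness of $\lambda$, $\lambda'$.

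For part (i), Clark--Ocone identifies $\vartheta_t^\eps$. Since $D_u^W\Yh{s} = \MCK^\eps(s-u)\mathbf{1}_{\{u\leq s\}}$, one computes $D_u^W\!\int_0^T G(\Yh{s})\ud s = \int_u^T G'(\Yh{s})\MCK^\eps(s-u)\ud s$; conditioning at $u=t$ produces $\vartheta_t^\eps$, whose square-integrability in $(t,\omega)$ follows from $G' \in L^\infty$ and $\MCK^\eps \in L^2$. To extract $\overline D$, I would use the tower property and the fact that $(\Yh{t},\Yh{s})$ is jointly Gaussian with normalized correlation $\MCC_Y((s-t)/\eps)$; changing variables $u=(s-t)/\eps$ and using $\MCK^\eps(\eps u) = \eps^{-1/2}\MCK(u)$ yields
\begin{equation*}
\EE[\lambda(\Yh{t})\vartheta_t^\eps] = \sqrt{\eps}\int_0^{(T-t)/\eps} F(u)\MCK(u)\ud u, \quad F(u) := \iint_{\RR^2}\lambda(\sigma_{ou}z)(\lambda\lambda')(\sigma_{ou}z')p_{C_Y(u)}(z,z')\ud z\ud z'.
\end{equation*}
Boundedness of $F$ together with $\MCK \in L^1(\RR_+)$ (for $H<1/2$, from the cancellation $\MCK(t) \sim t^{H-3/2}$ of the slowly decaying $t^{H-1/2}$ term in \eqref{def_kernel}) identifies $\overline D$ as the $\eps\to 0$ limit, with $\widetilde D_t^\eps = -\sqrt{\eps}\int_{(T-t)/\eps}^\infty F(u)\MCK(u)\ud u$ uniformly $O(\sqrt{\eps})$ and pointwise $o(\sqrt{\eps})$ for $t<T$ by dominated convergence.

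For $\kappa_t^\eps$, I would split $\kappa_t^\eps = \int_0^t[\lambda(\Yh{s})\vartheta_s^\eps - \EE(\lambda(\Yh{s})\vartheta_s^\eps)]\ud s + \int_0^t \widetilde D_s^\eps \ud s$; the second term is $o(\sqrt{\eps})$ from the pointwise decay, and Fubini on the second moment of the first (centered) integral reduces matters to a covariance estimate that by joint Gaussian stationarity lives on the $\eps$-scale. For part (ii), $\phi_t^\eps$ has zero mean from $\EE[G(\Yh{s})] = 0$. For the $L^2$ decay I would decompose $\Yh{s} = m_s + n_s$ into the $\MCG_t$-measurable prediction $m_s = \int_{-\infty}^{t}\MCK^\eps(s-u)\ud W_u^Y$ and the independent innovation $n_s \sim \mc{N}(0,\eta_s)$ with $\eta_s = \int_0^{(s-t)/\eps}\MCK^2$, so that $\EE[G(\Yh{s})\mid\MCG_t] = f_{\eta_s}(m_s)$ with $f_v(m) := \EE[G(m+\sqrt{v}Z)]$. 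Taylor-expanding in $m$ about $0$, using $f_{\sigma_{ou}^2}(0) = 0$ and boundedness of $\partial_m f_v$, $\partial_m^2 f_v$, yields
\begin{equation*}
\ltwonorm{\EE[G(\Yh{s})\mid\MCG_t]} \leq C|\eta_s - \sigma_{ou}^2| + C\ltwonorm{m_s} + C\ltwonorm{m_s}^2.
\end{equation*}
The asymptotics $|\eta_s - \sigma_{ou}^2| = \int_{(s-t)/\eps}^\infty \MCK^2 \sim ((s-t)/\eps)^{2H-2}$ and $\EE[m_s^2] = \sigma_{ou}^2 - \eta_s$ give an integrand $O(((s-t)/\eps)^{H-1})$ on large scales and $O(1)$ on small scales; integrating via Minkowski with a split at $s-t = \eps$ then produces $\ltwonorm{\phi_t^\eps} = O(\eps^{1-H})$ uniformly in $t$. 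Part (iii) is immediate from the Hermite expansion $G(\sigma_{ou}z) = \sum_{k\geq 1}c_k H_k(z)$, giving $\EE[G(\Yh{s})G(\Yh{r})] = \sum_{k\geq 1}c_k^2 k!\,\MCC_Y((s-r)/\eps)^k$; since $\MCC_Y \in L^1(\RR_+)$ for $H<1/2$ (tail decay $|u|^{2H-2}$), Fubini delivers $\EE[(I_t^\eps)^2] \leq KT\eps \leq K'T\eps^{1-H}$ on $(0,1]$.

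The main obstacle will be the sharp $\eps^{1-H}$ bound in (ii): the crude Jensen-plus-Fubini estimate yields only $O(\sqrt{\eps})$, and reaching $\eps^{1-H}$ requires the conditional Gaussian decomposition above together with the precise tail behavior $\MCK^2(v) \sim v^{2H-3}$ of the kernel. A secondary point is keeping the $L^2$ estimate for $\kappa_t^\eps$ uniform in $t$; since $\vartheta_s^\eps$ depends on the entire past of $W^Y$, one must verify that the joint law of $(\lambda(\Yh{s})\vartheta_s^\eps, \lambda(\Yh{r})\vartheta_r^\eps)$ is genuinely stationary on the $\eps$-scale.
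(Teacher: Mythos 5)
Your overall strategy is sound and is essentially the one behind the proof the paper points to (the paper itself gives no argument here, deferring to \cite[Appendix~A]{GaSo:17}): Clark--Ocone with $D_u\Yh{s}=\kereps(s-u)\mathds{1}_{\{u\le s\}}$ to identify $\vartheta_t^\eps$, the exact Gaussian computation $\EE[\lambda(\Yh{t})G'(\Yh{s})]=F((s-t)/\eps)$ with the change of variables $u=(s-t)/\eps$ to produce $\sqrt\eps\,\overline D$ and $\widetilde D_t^\eps=-\sqrt\eps\int_{(T-t)/\eps}^\infty F\MCK$, the prediction/innovation decomposition with the tail $\MCK^2(v)\sim v^{2H-3}$ for the sharp $\eps^{1-H}$ bound in (ii), and the Hermite/Mehler bound with $\MCC_Y\in L^1(\RR_+)$ for (iii) (your $O(\eps)$ bound is in fact stronger than the stated $\eps^{1-H}$). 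Parts (ii), (iii) and the first two claims of (i) are complete as written.

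The genuine gap is the $L^2$ estimate for the centered part of $\kappa_t^\eps$, which you dispatch in one sentence and flag yourself. ``Joint Gaussian stationarity on the $\eps$-scale'' is not the right thing to invoke: exact stationarity fails because $\vartheta_s^\eps$ depends on the remaining horizon $T-s$ (harmless, since that region has measure $O(\eps)$), and, more importantly, $\lambda(\Yh{r})\vartheta_r^\eps$ and $\lambda(\Yh{s})\vartheta_s^\eps$ share the entire past of $W^Y$, so no block-separation/decorrelation argument applies directly. What is needed is a covariance decay obtained by conditioning at the earlier time: for $r<s$, $\mathrm{Cov}\bigl(\lambda(\Yh{r})\vartheta_r^\eps,\lambda(\Yh{s})\vartheta_s^\eps\bigr)=\mathrm{Cov}\bigl(\lambda(\Yh{r})\vartheta_r^\eps,\EE[\lambda(\Yh{s})\vartheta_s^\eps\vert\MCG_r]\bigr)$; since $\abs{\vartheta_s^\eps}\le \lVert G'\rVert_\infty\sqrt\eps\int_0^\infty\MCK$ and the $\MCG_r$-prediction error of $\Yh{u}$, $u\ge s$, is $O\bigl(((s-r)/\eps)^{H-1}\bigr)$ in $L^2$ (the same kernel-tail computation as in your part (ii)), one gets $\abs{\mathrm{Cov}}\le C\eps\min\bigl\{1,((s-r)/\eps)^{H-1}\bigr\}$. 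Any such polynomial decay suffices: integrating over $[0,T]^2$ gives a second moment bounded by $C\eps(\eps+T^H\eps^{1-H})=o(\eps)$ uniformly in $t$, which is what the claim $\lim_{\eps\to0}\eps^{-1/2}\sup_t\ltwonorm{\kappa_t^\eps}=0$ requires; note that the trivial bound $\abs{\mathrm{Cov}}\le C\eps$ alone only yields $O(\eps)$, i.e.\ $O(\sqrt\eps)$ for the norm, which is not enough. Your treatment of the deterministic remainder $\int_0^t\widetilde D_s^\eps\ud s$ by dominated convergence is fine (explicitly it is $O(\eps^{1-H})=o(\sqrt\eps)$ since $H<\half$). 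Until the conditional-covariance bound is written out, the $\kappa_t^\eps$ statement is asserted rather than proved; the rest of the lemma stands.
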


\begin{proof}
	
	All results are slightly different versions or straightforward generalizations of lemmas in \cite[Appendix~A]{GaSo:17}, thus we omit the details here.
\end{proof}

\begin{lem}\label{lem_comparison}\quad
	\begin{enumerate}[(i)]
	\item\label{lem_Yhtilde}
	Denote by $\Yht{t} $ the  $\widetilde \PP$-stationary fractional Ornstein--Ulenbeck process, whose moving average representation is of the form
	\begin{equation*}
	\Yht{t} := \int_{-\infty}^t \kereps(t-s)\ud \widetilde W_s^Y.
	\end{equation*}
	Then, $\sup_{t\in[0,T]}\abs{\Yht{t}-\Yh{t}} \leq K\sqrt\eps$.
	\item\label{lem_varthetatilde}
	Recall the stochastic process $\vartheta_t^\eps$ defined in \eqref{def_vartheta}, and $\widetilde \vartheta_t^\eps$ defined in \eqref{def_varthettilde}:
	\begin{equation*}
		\vartheta_t^\eps := \int_t^T  \EE[G'(\Yh{s})\vert\MCG_t]\kereps(s-t)\ud s, \quad 
	\widetilde\vartheta_t^\eps := \int_t^T \widetilde \EE[G'(\Yht{s})\vert\MCG_t]\kereps(s-t)\ud s.
	\end{equation*}
	Then, $\sup_{t\in[0,T]}\abs{\widetilde\vartheta_t^\eps-\vartheta_t^\eps} \leq K\eps$.
	\end{enumerate}
\end{lem}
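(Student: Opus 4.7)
The plan is to prove the two items in sequence, using the explicit representation obtained in (i) as the starting point for a Gaussian conditioning argument in (ii).

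For (i), I would invoke the Girsanov shift implied by \eqref{def_Ptilde}: on $[0,T]$,
\begin{equation*}
\widetilde W^Y_u = W^Y_u + \int_0^u a_v\ud v, \qquad a_v = -\rho\,\frac{1-\gamma}{\gamma}\,\lambda(\Yh{v}),
\end{equation*}
while the two Brownian motions coincide on $(-\infty,0)$. Substituting this into the moving-average formulas for $\Yht{t}$ and $\Yh{t}$, the stochastic pieces cancel and only the drift correction survives:
\begin{equation*}
\Yht{t}-\Yh{t} = \int_0^t \kereps(t-u)\, a_u\ud u.
\end{equation*}
Since $\lambda$ is bounded under the standing model assumption, $|a_u|$ is bounded deterministically, and the scaling identity $\int_0^t \kereps(v)\ud v = \sqrt{\eps}\int_0^{t/\eps}\MCK(w)\ud w$, combined with $\MCK\in L^1(\RR^+)$ for $H<\half$ (stated after \eqref{eq_Zhcovar}), immediately yields the uniform bound $K\sqrt{\eps}$.

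For (ii), I would rewrite both conditional expectations as explicit Gaussian integrals. Splitting the moving-averages at time $t$ and invoking the Girsanov shift on $[0,t]$ yields, for $s\geq t$,
\begin{equation*}
\Yh{s} = \Yh{s,t} + \int_t^s \kereps(s-u)\ud W^Y_u, \qquad \Yht{s} = \Yh{s,t} + g(s,t) + \int_t^s \kereps(s-u)\ud \widetilde W^Y_u,
\end{equation*}
where $\Yh{s,t}:=\int_{-\infty}^t \kereps(s-u)\ud W^Y_u$ is $\MCG_t$-measurable and $g(s,t):=\int_0^t \kereps(s-u)\,a_u\ud u$ is the shift inherited from (i). Under each measure, the remaining stochastic integral is centered Gaussian with the common variance $\int_t^s\kereps(s-u)^2\ud u$ and independent of $\MCG_t$; hence both conditional expectations share the common functional form $F_{s,t}(x):=\EE[G'(x+N_{s,t})]$ with $N_{s,t}$ that Gaussian, and
\begin{equation*}
\widetilde\EE[G'(\Yht{s})\vert\MCG_t]-\EE[G'(\Yh{s})\vert\MCG_t] = F_{s,t}(\Yh{s,t}+g(s,t))-F_{s,t}(\Yh{s,t}).
\end{equation*}

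Since $G''=(\lambda')^2+\lambda\lambda''$ is bounded by the smoothness assumptions on $\lambda$, each $F_{s,t}$ is Lipschitz with constant $\|G''\|_\infty$, so the integrand of $\widetilde\vartheta_t^\eps-\vartheta_t^\eps$ is bounded by $\|G''\|_\infty|g(s,t)|$. Applying the same $\sqrt{\eps}$-scaling as in (i) to both $|g(s,t)|\leq\|a\|_\infty\int_{s-t}^s\kereps(v)\ud v\leq K\sqrt{\eps}$ uniformly in $s\in[t,T]$ and $\int_t^T\kereps(s-t)\ud s\leq K'\sqrt{\eps}$, and multiplying the two, produces the required uniform bound $K\eps$. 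The delicate point is the identification $\MCG_t=\sigma(\widetilde W^Y_v,v\leq t)$ that underlies the $\widetilde\PP$-independence of the future increment from $\MCG_t$; this follows from the $\MCG$-adaptedness of $v\mapsto\int_0^v a_s\ud s$ together with $\widetilde W^Y$ being a $\widetilde\PP$-Brownian motion, but the fact that $g(s,t)$ involves the path of $W^Y$ on all of $(-\infty,t]$ (via $\Yh{}$) means one must argue this identification carefully, e.g., by solving a Volterra-type equation on $[0,t]$.
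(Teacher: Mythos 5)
Your proposal is correct and is essentially the argument the paper intends: the paper's proof simply defers to \cite[Lemma A.3]{FoHu2:17} together with $\MCK \in L^1$ for $H<\half$, and your Girsanov-shift computation for (i) plus the Gaussian-conditioning/Lipschitz bound with the two $\sqrt{\eps}$ kernel integrals for (ii) is exactly that omitted argument with the correct scalings. (Minor remark: the $\widetilde\PP$-independence of $\int_t^s \kereps(s-u)\ud \widetilde W_u^Y$ from $\MCG_t$ follows directly from Girsanov applied with respect to the filtration $(\MCG_t)$, to which $a$ is adapted, so the Volterra-type identification you worry about at the end is not actually needed.)
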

\begin{proof}
Both are proved by using the arguments in \cite[Lemma A.3]{FoHu2:17} and the fact $\MCK(t) \in L^1$, and we omit the details for simplicity. 
\end{proof}

\begin{lem}\label{lem_psihat}
Recall the $\widetilde \PP$-martingale $\widehat \psi_t^\eps$ defined in \eqref{def_psitilde}	
\begin{equation*}
\widehat \psi_t^\eps = \widetilde \EE\left[\int_0^T G(\Yh{s}) \ud s \vert \MCG_t\right].
\end{equation*}
Denote its martingale representation by
\begin{equation*}
\ud \widehat \psi_t^\eps = \widehat \vartheta_t^\eps \ud \widetilde W_t^Y.
\end{equation*}
Then, the process $\widehat \vartheta_t^\eps$ satisfies $\sup_{t \in [0,T]} \abs{\vartheta_t^\eps - \widehat \vartheta_t^\eps} \leq K\eps$.

\begin{proof}
We first claim that, for all $t \in [0,T]$
\begin{equation}\label{eq_DtYs}
\int_t^T \abs{\widetilde \MCD_t \Yh{s}} \ud s \leq K\sqrt{\eps},
\end{equation}
where $\widetilde \MCD$ denotes the Malliavin derivative with respect to $\widetilde W$. This is obtained by applying the derivation in \cite[Lemma 3.1]{FoHu:17} and the fact that $\int_0^T \MCK^\eps(s)\ud s$ is bounded by $K\sqrt\eps$.

Then, $\widehat \vartheta_t^\eps$ is computed as:
\begin{align*}
\widehat \vartheta_t^\eps &= \widetilde \EE\left[\int_0^T G'(\Yh{s})\widetilde \MCD_t \Yh{s} \ud s \bigg\vert \MCG_t\right]\\
& = \int_t^T \widetilde \EE\left[G'(\Yh{s})\vert\MCG_t\right]\MCK^\eps(s-t) \ud s + \widetilde \EE\left[\int_t^T G'(\Yh{s})\int_t^s \MCK^\eps(s-u)\rho \left(\frac{1-\gamma}{\gamma}\right)\lambda'(\Yh{u})\widetilde \MCD_t \Yh{u} \ud u\ud s \bigg\vert \MCG_t\right] \\
& = \widetilde \vartheta_t^\eps  + R^\eps = \vartheta_t^\eps + (\widetilde \vartheta_t^\eps -  \vartheta_t^\eps) + R^\eps,
\end{align*}
with $R^\eps$ defined as:
\begin{align*}
R^\eps := & \int_t^T \widetilde \EE\left[G''(\chi_s)(\Yh{s}-\Yht{s})\big\vert\MCG_t\right]\MCK^\eps(s-t) \ud s \\
&+  \widetilde \EE\left[\int_t^T G'(\Yh{s})\int_t^s \MCK^\eps(s-u)\rho \left(\frac{1-\gamma}{\gamma}\right)\lambda'(\Yh{u})\widetilde \MCD_t \Yh{u} \ud u\ud s \bigg\vert \MCG_t\right],
\end{align*}
and $\chi_s$ is the remainder of Taylor expansion. Given Lemma~\ref{lem_comparison}\eqref{lem_varthetatilde}, it remains to show that $R^\eps$ is bounded by $K\eps$. The first term in $R^\eps$ is guaranteed by Lemma~\ref{lem_comparison}\eqref{lem_Yhtilde} and $\MCK(t) \in L^1$, while the second term is by the boundedness of $\lambda$ and its derivatives, $\MCK(t) \in L^1$ and \eqref{eq_DtYs}.
	\end{proof}

\end{lem}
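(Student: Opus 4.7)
The plan is to obtain an explicit expression for $\widehat\vartheta_t^\eps$ via the Clark--Ocone formula under $\widetilde\PP$, and then compare it to $\vartheta_t^\eps$ by reusing the estimates already available in Lemma~\ref{lem_comparison}. Applying Clark--Ocone to the random variable $\int_0^T G(\Yh{s})\ud s$ under $\widetilde\PP$ yields
\begin{equation*}
\widehat\vartheta_t^\eps = \widetilde\EE\!\left[\int_t^T G'(\Yh{s})\,\widetilde\MCD_t \Yh{s}\,\ud s \,\bigg|\, \MCG_t\right],
\end{equation*}
where $\widetilde\MCD$ is the Malliavin derivative with respect to $\widetilde W^Y$. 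Using $\ud W_u^Y = \ud\widetilde W_u^Y + \rho\,\frac{1-\gamma}{\gamma}\lambda(\Yh{u})\ud u$ in the moving-average representation $\Yh{s}=\int_{-\infty}^s \kereps(s-u)\ud W_u^Y$ and differentiating for $0\leq t<s$ gives the Volterra-type identity
\begin{equation*}
\widetilde\MCD_t \Yh{s} = \kereps(s-t) + \rho\,\frac{1-\gamma}{\gamma}\int_t^s \kereps(s-u)\,\lambda'(\Yh{u})\,\widetilde\MCD_t\Yh{u}\ud u.
\end{equation*}

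The key auxiliary estimate needed is $\int_t^T \abs{\widetilde\MCD_t\Yh{s}}\ud s \leq K\sqrt{\eps}$ uniformly in $t$. Since $H<\half$ ensures $\MCK\in L^1$ with $\int_0^T \kereps(u)\ud u = O(\sqrt{\eps})$, this follows by iterating the Volterra identity above: the $k$-th Picard term is bounded by $(\|\lambda'\|_\infty\|\kereps\|_{L^1})^{k}\,\|\kereps\|_{L^1}$, and the resulting geometric series converges precisely because $\|\kereps\|_{L^1}$ is small.

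With this in hand, substituting the expression for $\widetilde\MCD_t\Yh{s}$ into the Clark--Ocone formula splits $\widehat\vartheta_t^\eps$ into a direct term
\begin{equation*}
A_t^\eps := \int_t^T \widetilde\EE[G'(\Yh{s})\,|\,\MCG_t]\,\kereps(s-t)\ud s
\end{equation*}
plus a remainder controlled in absolute value by $K\,\|\kereps\|_{L^1}\cdot\sup_{s\le T}\int_t^s\abs{\widetilde\MCD_t\Yh{u}}\ud u = O(\sqrt{\eps})\cdot O(\sqrt{\eps})=O(\eps)$, using boundedness of $G'$ and $\lambda'$. The direct term $A_t^\eps$ differs from $\widetilde\vartheta_t^\eps$ only through $G'(\Yh{s})$ versus $G'(\Yht{s})$; Taylor-expanding $G'$, using Lemma~\ref{lem_comparison}\eqref{lem_Yhtilde} ($\sup_{t}\abs{\Yh{t}-\Yht{t}}\leq K\sqrt{\eps}$) and boundedness of $G''$, together with $\int\kereps = O(\sqrt{\eps})$, yields $\abs{A_t^\eps-\widetilde\vartheta_t^\eps}\leq K\eps$. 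A triangle inequality combined with Lemma~\ref{lem_comparison}\eqref{lem_varthetatilde}, $\abs{\widetilde\vartheta_t^\eps - \vartheta_t^\eps}\leq K\eps$, then closes the argument.

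The main obstacle is the Malliavin-derivative bound $\int_t^T\abs{\widetilde\MCD_t\Yh{s}}\ud s\leq K\sqrt{\eps}$. Because $\kereps$ is singular at the origin, a naive pointwise Gr\"onwall estimate does not apply; the resolution is that $\MCK\in L^1$ for $H<\half$ and that $\|\kereps\|_{L^1}=O(\sqrt{\eps})$ is small, which is exactly what lets the Picard iteration close. Once this estimate is in place, everything else is routine Taylor expansion and triangle-inequality bookkeeping that reuses Lemma~\ref{lem_comparison}.
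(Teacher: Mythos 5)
Your proposal is correct and follows essentially the same route as the paper's proof: Clark--Ocone under $\widetilde\PP$, the Volterra identity for $\widetilde\MCD_t\Yh{s}$ arising from the Girsanov drift, the key bound $\int_t^T\abs{\widetilde\MCD_t\Yh{s}}\ud s\leq K\sqrt{\eps}$, the split into the direct term plus an $O(\eps)$ remainder, a Taylor expansion to compare with $\widetilde\vartheta_t^\eps$, and finally Lemma~\ref{lem_comparison}\eqref{lem_varthetatilde}. The only (minor) deviation is that you establish the Malliavin-derivative bound by a self-contained Picard/Volterra iteration exploiting $\int_0^T\kereps(s)\ud s = O(\sqrt{\eps})$, whereas the paper obtains it by invoking the derivation of \cite[Lemma 3.1]{FoHu:17} together with the same $L^1$ fact, so the content coincides.
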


\begin{lem}\label{lem_Mj} The processes $M_t^{(j)}$, $j = 1, 2, 3$ defined in \eqref{def_M1}, \eqref{def_M2} and \eqref{def_M3} are true $\PP$-martingales.
\end{lem}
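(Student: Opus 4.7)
The plan is to recognize that each $M^{(j)}$ is defined as an It\^o stochastic integral (against either $W$ or the Brownian motion $W^Y$ driving $\psi^\eps$), and therefore is automatically a local martingale. To upgrade to a true $\PP$-martingale it suffices to check that the integrands lie in $L^2(\Omega\times[0,T])$. I would therefore go term by term, exploiting (a) the closed-form expressions for $\vz$ and $\vo$ given in Theorem~\ref{thm_Vtpowerexpansion}, (b) the boundedness of $\lambda$ (and its derivative, used only through $G'=\lambda\lambda'$) built into the standing model assumptions, and (c) moment bounds on $X_t^\pz$.

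First I would collect the ingredients. From the explicit formulas, $\vz_x$, $\vo_x$, $D_1\vz$, $D_1^2\vz$ are all of the form $(\text{bounded in }t)\times x^{1-\gamma}$ or $x^{-\gamma}$, so each is dominated by $C\,(1+(X_t^\pz)^{1-\gamma}+(X_t^\pz)^{-\gamma})$ on $[0,T]$. Since the SDE \eqref{def_Xtpz} for $X^\pz$ has the bounded drift $\lambda^2(\Yh{t})/\gamma$ and bounded diffusion coefficient $\lambda(\Yh{t})/\gamma$, its solution is an exponential of a time-integral of bounded processes, which implies $\sup_{t\le T}\EE[(X_t^\pz)^r]<\infty$ for every $r\in\RR$. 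Using $\sigma\pz=\frac{\lambda(\Yh{t})}{\gamma}X_t^\pz$, the integrand of $M^{(1)}$ is $\frac{\lambda(\Yh{t})}{\gamma}X_t^\pz\vz_x(t,X_t^\pz)$, which is bounded in modulus by $C|\lambda(\Yh{t})|\,(X_t^\pz)^{1-\gamma}$; its square is in $L^1(\Omega\times[0,T])$ by the two bullets just listed. The argument for $M^{(3)}$ is identical after replacing $\vz_x$ by $\vo_x$.

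For $M^{(2)}$ I would split the two contributions. The second integrand, $\phi_t^\eps\lambda(\Yh{t})D_1^2\vz(t,X_t^\pz)$, is handled by the a priori bound $|\phi_t^\eps|\le\tfrac12(T-t)\|\lambda^2-\overline\lambda^2\|_\infty$ coming directly from Jensen's inequality applied to the definition \eqref{def_phi}, combined with the bound on $D_1^2\vz$ and the moment control on $X_t^\pz$. For the first integrand, using $\ud\psi_t^\eps=\vartheta_t^\eps\ud W_t^Y$ (Lemma~\ref{lem_moments}(i)), I need $L^2(\Omega\times[0,T])$ control on $D_1\vz(t,X_t^\pz)\vartheta_t^\eps$. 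Here I would invoke the deterministic bound
\begin{equation*}
|\vartheta_t^\eps|\;\le\;\|G'\|_\infty\int_0^T\MCK^\eps(u)\ud u\;=\;\|G'\|_\infty\sqrt{\eps}\int_0^{T/\eps}\MCK(u)\ud u,
\end{equation*}
which is finite because $\MCK\in L^1$ for $H<\tfrac12$ (as noted after \eqref{eq_Zhcovar}). With this deterministic bound on $\vartheta^\eps$ and boundedness of $\lambda\lambda'$, the required $L^2$ estimate again reduces to moments of $X_t^\pz$.

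The main (and really only) obstacle is the bookkeeping of moments: one must verify once and for all that $X^\pz$ has finite moments of every order uniformly in $t\in[0,T]$, which is where the boundedness of $\lambda$ enters crucially. After that, each of the three integrands is dominated by a product of bounded factors and a polynomial in $X_t^\pz$, so Fubini plus the moment bound yields the $L^2$-integrability that turns each local martingale into a genuine $\PP$-martingale.
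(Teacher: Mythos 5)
Your proposal is correct and follows essentially the same route as the paper's proof: reduce to an integrability check on the quadratic variation/integrands using the boundedness of $\lambda$ (and $\lambda\lambda'$), the finiteness of all moments of $X_t^{\pi^{(0)}}$ from \eqref{def_Xtpz}, and, for $M^{(2)}$, the uniform bounds on $\phi_t^\eps$ and on $\vartheta_t^\eps$ via $\MCK\in L^1$. The only cosmetic difference is that you invoke the $L^2(\Omega\times[0,T])$ integrand criterion while the paper passes through the Burkholder--Davis--Gundy bound $\EE[\langle M^{(j)}\rangle_T^{1/2}]<\infty$; both are standard and equivalent in effect here.
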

\begin{proof}
By the Burkholder--Davis--Gundy inequality, it suffices to show $\EE\left[\average{M^{(j)}}_T^{1/2}\right] < \infty$, for $j = 1, 2, 3$.

For the case $j=1$, we compute
\begin{align*}
\ud \average{M^{(1)}}_t = \lambda^2(\Yh{t}) \left(D_1\vz(t, X_t^\pz)\right)^2 \ud t \leq K\left(X_t^\pz\right)^{2-2\gamma} \ud t,
\end{align*}
by the boundedness of $\lambda$. Then since $X_t^\pz$ has $p^{th}$-moment for any $p$ uniformly in $t \in [0,T]$, we obtain   
\begin{align*}
\EE\left[\average{M^{(1)}}_T^{1/2}\right] \leq \left[\EE\int_0^T K( X_t^\pz)^{2-2\gamma} \ud t\right]^{1/2} \leq  K \sup_{t\in[0,T]}\left(\EE[(X_t^\pz)^{2-2\gamma}]\right)^{1/2} < \infty.
\end{align*}
The martingality of $M_t^{(3)}$ is obtained in the same way; while for  $M_t^{(2)}$, additional properties such as the boundedness (uniform in $\delta$) of $\vartheta_t^\eps$ and $\phi_t^\eps$ are used.

\end{proof}

\begin{lem}\label{lem_Rjgeneral} The random variable $R_{t,T}^{(j)}$, $j= 1, 2, 3$ defined in \eqref{def_R1general}-\eqref{def_R3general} 
	\begin{align*}
	&R_{t,T}^{(1)} := \int_t^T  \phi_s^\eps\left[\half(\lambda^2(\Yh{s})-\overline\lambda^2)(D_2+2D_1)D_1\vz(s,X_s^\pz)\right]\ud s,\\
	&R_{t,T}^{(2)} := \int_t^T  \rho\left(\lambda(\Yh{s})\vartheta_s^\eps-\sqrt{\eps}\,\overline D \right)D_1^2\vz(s, X_s^\pz)\ud s,\\
	&R_{t,T}^{(3)} := \int_t^T  \half\sqrt\eps\rho\overline D (\lambda^2(\Yh{s}) - \overline{\lambda}^2)(D_2 + 2D_1)\vo(s, X_s^\pz) \ud s,
	\end{align*}	
	are of order $o(\sqrt\eps)$:
\begin{equation}\label{app_Rjgeneral}
	\lim_{\eps\to 0}\eps^{-1/2}\; \EE\abs{R_{t,T}^{(j)}} = 0, \quad \forall j = 1, 2, 3.
\end{equation}

\end{lem}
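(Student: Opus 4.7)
The three terms have different structures and will be treated separately, but the common thread is to extract a small scalar quantity (controlled by Lemma~\ref{lem_moments}) and bound the remaining state-dependent factor uniformly in $L^p$. A useful preliminary observation is that $\vz(t,x)$ and $\vo(t,x)$ are bounded deterministic time factors multiplied by $x^{1-\gamma}$, so every operator $D_k$ applied to $\vz$ or $\vo$ yields a constant multiple of the original function; consequently all state-dependent coefficients appearing in the three $R^{(j)}_{t,T}$ have the form $c_j(s)(X_s^\pz)^{1-\gamma}$ with $c_j$ bounded deterministic. Since $\lambda$ is bounded, $X_s^\pz$ has all $L^p$ moments uniformly in $s\in[0,T]$, so these coefficients have uniformly bounded $L^p$-norms for every $p<\infty$. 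With this in hand, $R^{(1)}_{t,T}$ is immediate: Cauchy--Schwarz under the expectation, together with $\sup_s\ltwonorm{\phi_s^\eps}\le K\eps^{1-H}$ from Lemma~\ref{lem_moments}\eqref{lem_phi}, gives $\EE\abs{R^{(1)}_{t,T}}\le K\eps^{1-H}=o(\sqrt{\eps})$ since $H<\half$ implies $1-H>\half$.

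For $R^{(2)}_{t,T}$, I would invoke the process $\kappa_s^\eps$ from Lemma~\ref{lem_moments}\eqref{lem_psi}, which satisfies $\ud\kappa_s^\eps = (\vartheta_s^\eps\lambda(\Yh{s})-\sqrt{\eps}\,\overline D)\,\ud s$ and $\sup_s\ltwonorm{\kappa_s^\eps}=o(\sqrt{\eps})$. Setting $f(s):=\rho D_1^2\vz(s,X_s^\pz)$, integration by parts yields
\begin{equation*}
R^{(2)}_{t,T} = f(T)\kappa_T^\eps - f(t)\kappa_t^\eps - \int_t^T \kappa_s^\eps\,\ud f(s).
\end{equation*}
The two boundary terms are $o(\sqrt{\eps})$ in $L^1$ by Cauchy--Schwarz using uniform $L^2$ bounds on $f$. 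For the remaining stochastic integral, I would apply It\^o's formula to $f(s)$ along $X^\pz$, split $\ud f(s)$ into its drift and martingale parts, and combine Fubini/BDG with the small-order $L^2$ bound on $\kappa^\eps$ to control each piece.

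For $R^{(3)}_{t,T}$ the prefactor $\sqrt{\eps}\rho\overline D$ is already extracted, and the remaining time integral is written as $\int_t^T g(s)\,\ud I_s^\eps$ with $g(s)=\half(D_2+2D_1)\vo(s,X_s^\pz)$ and $I^\eps$ from Lemma~\ref{lem_moments}\eqref{lem_i}. The same integration-by-parts scheme, combined with $\sup_s\ltwonorm{I_s^\eps}\le K\eps^{(1-H)/2}$, yields $\EE\abs{R^{(3)}_{t,T}}\le K\sqrt{\eps}\cdot\eps^{(1-H)/2}=K\eps^{1-H/2}=o(\sqrt{\eps})$ since $H<\half$. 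I expect $R^{(2)}_{t,T}$ to be the main obstacle: unlike the other two terms, the small quantity $\lambda(\Yh{s})\vartheta_s^\eps-\sqrt{\eps}\,\overline D$ is \emph{not} pointwise small, only in its time integral $\kappa_s^\eps$. This forces the integration-by-parts manoeuvre and the subsequent need for uniform $L^p$-control on both the drift and the martingale part of $\ud f$, together with a careful BDG argument on the martingale contribution; the analogous step for $R^{(3)}$ is cleaner because the $\sqrt{\eps}$ pre-factor already sits outside the integral.
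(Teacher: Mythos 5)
Your treatment of $R^{(1)}_{t,T}$ is exactly the paper's: Cauchy--Schwarz together with $\sup_{t}\ltwonorm{\phi_t^\eps}=\MCO(\eps^{1-H})$ and the moment bounds on $(X_s^\pz)^{1-\gamma}$. For $R^{(2)}_{t,T}$ and $R^{(3)}_{t,T}$ you take a genuinely different route. The paper never integrates by parts: it partitions $[t,T]$ into $N$ pieces, freezes $Z_s^{(2)}=D_1^2\vz(s,X_s^\pz)$ at the left endpoints so that the frozen sum is bounded by $N\sup_s\ltwonorm{Z_s^{(2)}}\sup_s\ltwonorm{\kappa_s^\eps}=o(\sqrt\eps)$ for each fixed $N$, controls the remainder using the $L^2$-Lipschitz property $\EE[(Z^{(2)}_u-Z^{(2)}_v)^2]\le K\abs{u-v}$ and the pointwise bound $\abs{\ud\kappa_s^\eps/\ud s}\le K\sqrt\eps$, obtaining $K\sqrt\eps/\sqrt N$, and then sends $N\to\infty$; the same scheme is repeated for $R^{(3)}$ with $I^\eps$ in place of $\kappa^\eps$. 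That argument requires no stochastic calculus on the state-dependent factor, only continuity of its $L^2$ modulus, which is why the paper can get away with purely $L^2$ information on $\kappa^\eps$ and $I^\eps$.

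In your integration-by-parts route the boundary and drift pieces are indeed $o(\sqrt\eps)$ by Cauchy--Schwarz, but the martingale piece is under-justified as written. After It\^o on $f(s)=\rho D_1^2\vz(s,X_s^\pz)$, its martingale part has integrand proportional to $\lambda(\Yh{s})(X_s^\pz)^{1-\gamma}$, which is \emph{unbounded}; BDG reduces $\EE\abs{\int_t^T\kappa_s^\eps\,\ud f^{mart}(s)}$ to $\bigl(\EE\int_t^T(\kappa_s^\eps)^2(X_s^\pz)^{2-2\gamma}\lambda^2(\Yh{s})\ud s\bigr)^{1/2}$, and the bound $\sup_s\ltwonorm{\kappa_s^\eps}=o(\sqrt\eps)$ alone does not control this mixed moment. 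You need higher integrability of $\kappa^\eps$: note that $\abs{G'}$ bounded and $\MCK\in L^1$ give $\int_0^T\kereps(u)\ud u\le K\sqrt\eps$, hence $\abs{\vartheta_t^\eps}\le K\sqrt\eps$ and $\abs{\kappa_t^\eps}\le K\sqrt\eps$ almost surely, so by interpolation $\lfournorm{\kappa_t^\eps}^2\le\infnorm{\kappa_t^\eps}\ltwonorm{\kappa_t^\eps}=o(\eps)$, which then closes the BDG step. The same remark applies to $R^{(3)}$, where the a.s. bound $\abs{I_t^\eps}\le K$ (boundedness of $\lambda$) plays this role; also your claimed rate $\eps^{1-H/2}$ is only valid for the boundary/drift contributions, not for the martingale one, though the conclusion $o(\sqrt\eps)$ survives. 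With these additions your proof is complete; without them, the step ``combine Fubini/BDG with the small-order $L^2$ bound on $\kappa^\eps$'' does not go through.
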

\begin{proof}
The proofs here are similar to the ones in \cite[Proposition 4.1]{GaSo:17}.

For the case $j=1$, using the definition of $\vz$ and $D_k$, the property $\phi_t^\eps \sim \MCO(\eps^{1-H})$ in $L^2$ (cf. Lemma~\ref{lem_moments}\eqref{lem_phi}), and the boundedness of $\lambda$, one deduces
\begin{align*}
\EE\abs{R_{t,T}^{(1)}} \leq K \EE\abs{\int_t^T \phi_s^\eps \left(X_s^\pz\right)^{1-\gamma}\ud s} \leq K \sup_{t \in [0,T]} \ltwonorm{\phi_t^\eps} \left(\EE\int_t^T \left(X_s^\pz\right)^{2-2\gamma} \ud s\right)^{1/2}.
\end{align*}
Since $X_t^\pz$ has $p^{th}$ moment for any $p$, the above expectation is of order $\eps^{1-H}$, and thus \eqref{app_Rjgeneral} is satisfied for $j=1$.

To prove \eqref{app_Rjgeneral} with $j = 2$, we denote $t_k = t + (T-t)k/N$, $Z_s^{(2)} = D_1^2\vz(s,X_s^\pz) = K (X_s^\pz)^{1-\gamma}$ and recall $\kappa_t^\eps$ defined in \eqref{def_kappa}, thus $R_{t,T}^{(2)}$ can be written as
\begin{align*}
R_{t,T}^{(2)} &  = \sum_{k=0}^{N-1} \int_{t_k}^{t_{k+1}} Z_s^{(2)} \frac{\ud \kappa_s^\eps}{\ud s} \ud s = \sum_{k=0}^{N-1} \int_{t_k}^{t_{k+1}} Z_{t_k}^{(2)} \frac{\ud \kappa_s^\eps}{\ud s} \ud s + \sum_{k=0}^{N-1} \int_{t_k}^{t_{k+1}} (Z_s^{(2)} - Z_{t_k}^{(2)}) \frac{\ud \kappa_s^\eps}{\ud s} \ud s \\
& = \sum_{k=0}^{N-1} Z_{t_k}^{(2)} (\kappa_{t_{k+1}}^\eps - \kappa_{t_k}^\eps) + \sum_{k=0}^{N-1} \int_{t_k}^{t_{k+1}} (Z_s^{(2)} - Z_{t_k}^{(2)}) \frac{\ud \kappa_s^\eps}{\ud s} \ud s \\
&:= R_{t,T}^{(2, a)} + R_{t,T}^{(2, b)}.
\end{align*}
We then claim two properties for $Z_t^{(2)}$: (a) it has finite second moment uniformly in $\eps$ and $s \in [0,T]$, i.e. 
\begin{align}\label{eq_Z2moments}
\sup_{\eps\in[0,1]}\sup_{t \in [0,T]}\EE[(Z_t^{(2)})^2] < \infty;
\end{align}
and (b) its increments are Lipschitz in $L^2$:
\begin{equation}\label{eq_bddincrement}
\EE[(Z_u^{(2)}-Z_v^{(2)})^2] \leq K\abs{u-v}.
\end{equation}
These are ensured by the boundedness of $\lambda(\cdot)$, and the formulation \eqref{def_Xtpz} of $X_t^\pz$ in the case of power utility. In general (utility), the two properties regarding $Z_t^{(2)}$ are also satisfied under proper assumptions, see \cite[Lemma~A.6]{FoHu2:17} for further details.
 
Now we proceed to the analysis of $R_{t,T}^{(2,a)}$ and $R_{t,T}^{(2,b)}$:
\begin{align*}
\EE\abs{R_{t,T}^{(2,a)}}&\leq \sqrt 2\sum_{k=0}^{N-1} \ltwonorm{Z_{t_k}^{(2)}}[\EE(\kappa_{t_k}^\eps)^2+\EE (\kappa_{t_{k+1}}^\eps)^2]^{1/2} \leq 2N \sup_{s\in[t,T]} \ltwonorm{Z_{s}^{(2)}} \sup_{s\in[t,T]} \ltwonorm{\kappa_{s}^\eps}
\end{align*}
and is of order $o(\sqrt{\eps})$ for any fixed $N$ by Lemma~\ref{lem_moments}\eqref{lem_psi}. For the second term, using \eqref{eq_bddincrement} and the fact that $\frac{\ud \kappa_s^\eps}{\ud s}$ is of (or higher than) order $\MCO(\sqrt\eps)$, one computes
\begin{align*}
\EE\abs{R_{t,T}^{(2,b)}} 
&\leq K \sqrt{\eps} \sum_{k=0}^{N-1}\int_{t_k}^{t_{k+1}} \ltwonorm{Z_s^{(2)}-Z_{t_k}^{(2)}} \ud s  \leq K\sqrt\eps \sum_{k=0}^{N-1}\int_{t_k}^{t_{k+1}} (s-t_k)^{1/2} \ud s = K\sqrt\eps\frac{1}{\sqrt N},
\end{align*}
and claims
\begin{align*}
	\lim_{\eps\to 0}\eps^{-1/2}\; \EE\abs{R_{t,T}^{(1,b)}} \leq \frac{K}{\sqrt N}
\end{align*}
holds for any $N$. Thus, we have the desired result, by letting $N \to \infty$.

Proof of \eqref{app_Rjgeneral} for $j=3$  is given by repeating the same argument as in the previous case to $Z_s^{(3)} = (D_2 + 2D_1)\vo(s, X_s^\pz)$ which also satisfies the two properties since $Z_s^{(3)} = K \left(X_s^\pz\right)^{1-\gamma}$, and to $\half\sqrt\eps\rho\overline D (\lambda^2(\Yh{s}) - \overline{\lambda}^2)$ which is clearly of order $\sqrt{\eps}$.

\end{proof}

\bibliographystyle{plainnat}
\bibliography{Reference}

\begin{thebibliography}{25}
\providecommand{\natexlab}[1]{#1}
\providecommand{\url}[1]{\texttt{#1}}
\expandafter\ifx\csname urlstyle\endcsname\relax
  \providecommand{\doi}[1]{doi: #1}\else
  \providecommand{\doi}{doi: \begingroup \urlstyle{rm}\Url}\fi

\bibitem[Bardet et~al.(2003)Bardet, Lang, Oppenheim, Philippe, and
  Taqqu]{BaLaOpPhTa:03}
J.-M. Bardet, G.~Lang, G.~Oppenheim, A.~Philippe, and M.~S. Taqqu.
\newblock Generators of long-range dependent processes: a survey.
\newblock \emph{Theory and applications of long-range dependence}, pages
  579--623, 2003.

\bibitem[Biagini et~al.(2008)Biagini, Hu, {\O}ksendal, and Zhang]{BiHuOkZh:08}
F.~Biagini, Y.~Hu, B.~{\O}ksendal, and T.~Zhang.
\newblock \emph{Stochastic calculus for fractional Brownian motion and
  applications}.
\newblock Springer Science \& Business Media, 2008.

\bibitem[Cheridito et~al.(2003)Cheridito, Kawaguchi, and Maejima]{ChKaMa:03}
P.~Cheridito, H.~Kawaguchi, and M.~Maejima.
\newblock Fractional ornstein-uhlenbeck processes.
\newblock \emph{Electronic Journal of Probability}, 8\penalty0 (3):\penalty0
  1--14, 2003.

\bibitem[Coutin(2007)]{Co:07}
L.~Coutin.
\newblock An introduction to (stochastic) calculus with respect to fractional
  brownian motion.
\newblock In \emph{S{\'e}minaire de Probabilit{\'e}s XL}, pages 3--65.
  Springer, 2007.

\bibitem[Fouque and Hu(2017{\natexlab{a}})]{FoHu:16}
J.-P. Fouque and R.~Hu.
\newblock Asymptotic optimal strategy for portfolio optimization in a slowly
  varying stochastic environment.
\newblock \emph{SIAM Journal on Control and Optimization}, 5\penalty0 (3),
  2017{\natexlab{a}}.

\bibitem[Fouque and Hu(2017{\natexlab{b}})]{FoHu:17}
J.-P. Fouque and R.~Hu.
\newblock Optimal portfolio under fractional stochastic environment.
\newblock \emph{arXiv preprint arXiv:1703.06969}, 2017{\natexlab{b}}.

\bibitem[Fouque and Hu(2018)]{FoHu2:17}
J.-P. Fouque and R.~Hu.
\newblock Optimal portfolio under fast mean-reverting fractional stochastic
  environment.
\newblock \emph{SIAM Journal on Financial Mathematics}, 2018.
\newblock to appear.

\bibitem[Fouque et~al.(2000)Fouque, Papanicolaou, and Sircar]{FoPaSi:00}
J.-P. Fouque, G.~Papanicolaou, and R.~Sircar.
\newblock \emph{Derivatives in financial markets with stochastic volatility}.
\newblock Cambridge University Press, 2000.

\bibitem[Fouque et~al.(2001)Fouque, Papanicolaou, and Sircar]{FoPaSi:01}
J.-P. Fouque, G.~Papanicolaou, and R.~Sircar.
\newblock Stochastic volatility and epsilon-martingale decomposition.
\newblock In \emph{Trends in Mathematics, Birkhauser Proceedings of the
  Workshop on Mathematical Finance}, pages 152--161. Springer, 2001.

\bibitem[Fouque et~al.(2003)Fouque, Papanicolaou, Sircar, and
  S{\o}lna]{FoPaSiSo:03}
J.-P. Fouque, G.~Papanicolaou, R.~Sircar, and K.~S{\o}lna.
\newblock Short time-scale in s\&p500 volatility.
\newblock \emph{Journal of Computational Finance}, 6\penalty0 (4):\penalty0
  1--24, 2003.

\bibitem[Fouque et~al.(2011)Fouque, Papanicolaou, Sircar, and
  S{\o}lna]{FoPaSiSo:11}
J.-P. Fouque, G.~Papanicolaou, R.~Sircar, and K.~S{\o}lna.
\newblock \emph{Multiscale Stochatic Volatility for Equity, Interest-Rate and
  Credit Derivatives}.
\newblock Cambridge University Press, 2011.

\bibitem[Fouque et~al.(2015)Fouque, Sircar, and Zariphopoulou]{FoSiZa:13}
J.-P. Fouque, R.~Sircar, and T.~Zariphopoulou.
\newblock Portfolio optimization \& stochastic volatility asymptotics.
\newblock \emph{Mathematical Finance}, 2015.

\bibitem[Frei and Schweizer(2008)]{FrSc:08}
C.~Frei and M.~Schweizer.
\newblock Exponential utility indifference valuation in two brownian settings
  with stochastic correlation.
\newblock \emph{Advances in Applied Probability}, 40\penalty0 (2):\penalty0
  401--423, 2008.

\bibitem[Garnier and S{\o}lna(2016)]{GaSo:16}
J.~Garnier and K.~S{\o}lna.
\newblock Option pricing under fast-varying long-memory stochastic volatility.
\newblock \emph{arXiv preprint arXiv:1604.00105}, 2016.

\bibitem[Garnier and S{\o}lna(2017{\natexlab{a}})]{GaSo:15}
J.~Garnier and K.~S{\o}lna.
\newblock Correction to black-scholes formula due to fractional stochastic
  volatility.
\newblock \emph{SIAM Journal on Financial Mathematics}, 8\penalty0 (1),
  2017{\natexlab{a}}.

\bibitem[Garnier and S{\o}lna(2017{\natexlab{b}})]{GaSo:17}
J.~Garnier and K.~S{\o}lna.
\newblock Option pricing under fast-varying and rough stochastic volatility.
\newblock \emph{arXiv preprint arXiv:1707.00610}, 2017{\natexlab{b}}.

\bibitem[Gatheral et~al.(2014)Gatheral, Jaisson, and Rosenbaum]{roughvol}
J.~Gatheral, T.~Jaisson, and M.~Rosenbaum.
\newblock Volatility is rough.
\newblock \emph{arXiv preprint arXiv:1410.3394}, 2014.

\bibitem[Hu(2018{\natexlab{a}})]{Hu:18}
R.~Hu.
\newblock Asymptotic optimal portfolio in fast mean-reverting stochastic
  environments.
\newblock \emph{arXiv preprint arXiv:1803.07720}, 2018{\natexlab{a}}.

\bibitem[Hu(2018{\natexlab{b}})]{Hu:XX}
R.~Hu.
\newblock Asymptotic methods for portfolio optimization problem in multiscale
  stochastic environments, 2018{\natexlab{b}}.
\newblock In preparation.

\bibitem[Kaarakka and Salminen(2011)]{KaSa:11}
T.~Kaarakka and P.~Salminen.
\newblock On fractional ornstein-uhlenbeck process.
\newblock \emph{Communications on Stochastic Analysis}, 5\penalty0
  (1):\penalty0 121--133, 2011.

\bibitem[Mandelbrot and Van~Ness(1968)]{MaVa:68}
B.~B. Mandelbrot and J.~W. Van~Ness.
\newblock Fractional brownian motions, fractional noises and applications.
\newblock \emph{SIAM review}, 10\penalty0 (4):\penalty0 422--437, 1968.

\bibitem[Merton(1969)]{Me:69}
R.~C. Merton.
\newblock Lifetime portfolio selection under uncertainty: {T}he continuous-time
  case.
\newblock \emph{Review of Economics and statistics}, 51:\penalty0 247--257,
  1969.

\bibitem[Merton(1971)]{Me:71}
R.~C. Merton.
\newblock Optimum consumption and portfolio rules in a continuous-time model.
\newblock \emph{Journal of economic theory}, 3\penalty0 (4):\penalty0 373--413,
  1971.

\bibitem[Tehranchi(2004)]{Te:04}
M.~Tehranchi.
\newblock Explicit solutions of some utility maximization problems in
  incomplete markets.
\newblock \emph{Stochastic Processes and their Applications}, 114\penalty0
  (1):\penalty0 109--125, 2004.

\bibitem[Zariphopoulou(1999)]{Za:99}
T.~Zariphopoulou.
\newblock Optimal investment and consumption models with non-linear stock
  dynamics.
\newblock \emph{Mathematical Methods of Operations Research}, 50\penalty0
  (2):\penalty0 271--296, 1999.

\end{thebibliography}

\end{document}